\newclass{\TSAT}{3SAT}
\newtheorem{theorem}{Theorem}
\newtheorem{observation}{Observation}
\newtheorem{corollary}{Corollary}
\title{Inserting one edge into a simple drawing is hard\thanks{	This work was started during the 6th Austrian-Japanese-Mexican-Spanish Workshop on Discrete Geometry in June 2019 in Austria. We thank all the participants for the good atmosphere as well as discussions on the topic. Also, we thank Jan Kyn\v{c}l for sending us remarks on a preliminary version of this work and an anonymous referee for further helpful comments. 
		A preliminary version of this paper appeared in the Proceedings of the 46th International
		Workshop on Graph-Theoretic Concepts in Computer Science {(WG'20)}}} 
\author[1]{Alan Arroyo\thanks{Email: alanmarcelo.arroyoguevara@ist.ac.at. This project has received funding from the European Union’s Horizon 2020 research and innovation programme under the Marie Sk{\l}odowska-Curie grant agreement No 754411.}}
\author[2]{Fabian Klute\thanks{Email: f.m.klute@uu.nl. Supported by the Netherlands Organisation for Scientific Research (NWO) under project no. 612.001.651 and by the Austrian Science Fund (FWF): J-4510.}}
\author[3]{Irene Parada\thanks{Email: irmde@dtu.dk. Partially supported by the Austrian Science Fund (FWF): W1230, by the collaborative DACH project \emph{Arrangements and Drawings} as FWF project \mbox{I 3340-N35}, and by the Independent Research Fund Denmark grant 2020-2023 (9131-00044B) “Dynamic Network Analysis”.}}
\author[4]{Raimund Seidel\thanks{Email: rseidel@cs.uni-saarland.de.}}
\author[5]{Birgit~Vogtenhuber\thanks{Email: bvogt@ist.tugraz.at. Partially supported by Austrian Science Fund within the collaborative DACH project \emph{Arrangements and Drawings} as FWF project \mbox{I 3340-N35}.}}
\author[6]{Tilo Wiedera\thanks{Email: tilo.wiedera@uos.de. Supported by the German Research Foundation (DFG) grant CH 897/2-2.}}
\affil[1]{IST Austria, Austria}
\affil[2]{Utrecht University, The Netherlands}
\affil[3]{Technical University of Denmark, Denmark}
\affil[4]{Universit\"at des Saarlandes, Germany}
\affil[5]{Graz University of Technology, Austria}
\affil[6]{Osnabr\"uck University, Germany}
\renewcommand{\figurename}{Figure}
\renewcommand{\int}{\ensuremath{\operatorname{int}}}  \newcommand{\snail}{\ensuremath{{\normalfont\bigcirc\hspace{-.89em}\varocircle\hspace{.1em}}}}
\newcommand*\patchAmsMathEnvironmentForLineno[1]{\expandafter\let\csname old#1\expandafter\endcsname\csname #1\endcsname
	\expandafter\let\csname oldend#1\expandafter\endcsname\csname end#1\endcsname
	\renewenvironment{#1}{\linenomath\csname old#1\endcsname}{\csname oldend#1\endcsname\endlinenomath}}\newcommand*\patchBothAmsMathEnvironmentsForLineno[1]{\patchAmsMathEnvironmentForLineno{#1}\patchAmsMathEnvironmentForLineno{#1*}}\AtBeginDocument{\patchBothAmsMathEnvironmentsForLineno{equation}\patchBothAmsMathEnvironmentsForLineno{align}\patchBothAmsMathEnvironmentsForLineno{flalign}\patchBothAmsMathEnvironmentsForLineno{alignat}\patchBothAmsMathEnvironmentsForLineno{gather}\patchBothAmsMathEnvironmentsForLineno{multline}}
\begin{document}

\maketitle

\begin{abstract}
	A {\em simple drawing} $D(G)$ of a graph $G$  is one where each pair of edges share at most one point: either a common endpoint or a proper crossing. An edge $e$ in the complement of $G$ can be {\em inserted} into $D(G)$ if there exists a simple drawing of $G+e$ extending $D(G)$. 
As a result of Levi's Enlargement Lemma, if a drawing is rectilinear (pseudolinear), that is, 
	the edges can be extended into an arrangement of lines (pseudolines), then any edge in the complement of $G$ can be inserted. 
	In contrast,  we show that  it is \NP -complete to decide whether one edge can be inserted into a simple drawing. 
	This remains true even if we assume that the drawing is pseudocircular, that is, 
	the edges can be extended to an arrangement of pseudocircles. On the positive side, we show that, given an arrangement of pseudocircles $\mathcal{A}$ and a pseudosegment $\sigma$,  it can be decided  in polynomial time whether there exists a pseudocircle $\Phi_\sigma$ extending $\sigma$ for which $\mathcal{A}\cup\{\Phi_\sigma\}$ is again an arrangement of pseudocircles.
\end{abstract}

\section{Introduction}
A \emph{simple drawing} of a graph $G$ (also known as \emph{good drawing} or as \emph{simple topological graph} in the literature) 
is a drawing $D(G)$ of $G$ in the plane such that every pair of edges shares at most one point that
is either a proper crossing or a common endpoint. 
In particular, no tangencies between edges are allowed and edges must not contain any vertices in their relative interior. 
It is commonly assumed that no three edges intersect in the same point; the results in this paper are independent of this assumption.
Simple drawings have received a great deal of attention in various areas of graph drawing, 
for example in connection with two long-standing open problems: 
the crossing number of the complete graph~\cite{schaefer2018crossing} and Conway's thrackle conjecture~\cite{problems_book}. 

In this work, we study the problem of inserting an edge into a simple drawing of a graph. 
Given a simple drawing $D(G)$ of a graph $G=(V,E)$ and an edge $e$ of the complement $\overline{G}$ of $ G $ we say that $e$ can be \emph{inserted} into $D(G)$ if there exists a simple drawing of $G' = (V,E\cup \{e\})$ 
that contains $D(G)$ as a subdrawing.

A \emph{pseudoline arrangement} is an arrangement of simple biinfinite arcs, called \emph{pseudolines}, 
such that every pair of pseudolines intersects in a single point that is a proper crossing.
Similarly, an \emph{arrangement of pseudocircles} is an arrangement of simple closed curves, 
called \emph{pseudocircles}, such that every pair of pseudocircles
intersects in either zero or two points, where in the latter case, both intersection points are proper crossings.
A simple drawing~$D(G)$ is called \emph{pseudolinear} if the drawing of every edge can be extended to a pseudoline 
such that the extended drawing forms a pseudoline arrangement.
Recently, Arroyo et al. showed that one can fully characterize these drawings by forbidden  subdrawings and recognize them in polynomial time~\cite{arroyo_socg_2020}.
Likewise, $D(G)$ is called  \emph{pseudocircular} if the drawing of every edge can be extended to a pseudocircle 
such that the extended drawing forms an arrangement of pseudocircles. 

Pseudoline arrangements were introduced by Levi~\cite{levi} in 1926 and have since been extensively studied; see for example~\cite{FelsnerGoodman2017}.
One of the most fundamental results on pseudoline arrangements, nowadays well known as Levi's Enlargement
Lemma, stems from Levi's original paper\footnote{Also known as Levi's Extension Lemma.
		      Several different proofs of Levi's Enlargement Lemma have been published since 
			  then~\cite{pseudo18,Gruenbaum_aas,schaefer2019proof,sweeping_pseudocircles91,SturmfelsZiegler1993}.
		     }.
It states that, for any given pseudoline arrangement $\cal L$ and any two points $p$ and $q$ not on the same pseudoline of $\cal L$, 
it is always possible to insert a pseudoline through $p$ and $q$ into $\cal L$ 
such that the resulting arrangement is again a valid pseudoline arrangement.

From Levi's Enlargement Lemma, it immediately follows that given any pseudolinear drawing $D(G)$ and any set $E^*$ of edges from $\overline{G}$,
it is always possible to insert all edges from~$E^*$ into $D(G)$ such that the resulting drawing is again pseudolinear.
In contrast, if the input drawing $D(G)$ is simple, 
Kyn\v{c}l~\cite{DBLP:journals/dcg/Kyncl13} showed that not every edge of $\overline{G}$
can be added to $D(G)$ such that the result is again a simple drawing,
not even if $G$ is a matching plus two isolated vertices which are the endpoints of the edge to be inserted~\cite{DBLP:journals/comgeo/KynclPRT15}.
The latter implies that an analogous statement to Levi's Enlargement Lemma 
is not true for arrangements of pseudosegments (simple arcs that pairwise intersect at most once).
Moreover, Arroyo, Derka, and Parada~\cite{arroyo_gd_2019} showed that given a simple drawing $D(G)$ 
and a set~$E^*$ of edges from $\overline{G}$, 
it is \NP-complete to decide whether $E^*$ can be inserted into $D(G)$ (such that the resulting drawing is again simple).
However, the cardinality of $E^*$ required for their hardness proof is linear in the size of the constructed graph.
The main open problem posed in~\cite{arroyo_gd_2019} is the complexity of deciding 
whether one single given edge $e$ of~$\overline{G}$ can be inserted into $D(G)$.

In this work, we show that this decision problem is 
\NP-complete, even if~$G$ is a matching plus two isolated vertices which are the endpoints of $ e $.
This implies that, given an arrangement $\cal S$ of pseudosegments and two points $p$ and~$q$ not on the same pseudosegment,
it is \NP-complete to decide whether it is possible to insert a pseudosegment from $p$ to $q$ into $\cal S$ 
such that the resulting arrangement is again a valid arrangement of pseudosegments (Section~\ref{sec:hardness}).
On the positive side, we observe that the decision problem is fixed-parameter tractable (\FPT) in the number of crossings of the original drawing~$G$ (Section~\ref{sec:fpt}). 
This algorithm cannot be directly adapted to obtain an \FPT-algorithm only with respect to the number of newly created crossings. 
Very recently, an overlapping set of authors showed an \FPT-algorithm for this problem that is tight under the Exponential Time Hypothesis~\cite{ganian2020crossingoptimal}. 
Using a different approach that requires invoking Courcelle's theorem~\cite{Courcelle90}, 
the authors present an \FPT-algorithm for inserting a bounded number of edges with a bounded number of new crossings into a simple drawing~$G$. 

Snoeyink and Hershberger~\cite{sweeping_pseudocircles91} showed the following analog to Levi's Enlargement Lemma for arrangements of pseudocircles: 
For any arrangement~${\cal A}$ of pseudocircles and any three points $p$, $q$, and $r$, 
not all of them on one pseudocircle of ${\cal A}$, there exists a pseudocircle $\Phi$ through $p$, $q$, and $r$ such that ${\cal A} \cup \{\Phi\}$ is again an arrangement of pseudocircles.
Refining our hardness proof, we show that the edge-insertion decision problem remains \NP-complete when
$D(G)$ is a pseudocircular drawing, 
regardless of whether the resulting drawing is required to be again pseudocircular or allowed to be any simple drawing.
This holds even if we are in addition given an arrangement of pseudocircles extending~$D(G)$.
On the positive side, we show that, given an arrangement~${\cal A}$ of pseudocircles and a pseudosegment $\sigma$, 
it can be decided in polynomial time whether there exists an extension~$\Phi_\sigma$ of $\sigma$ 
to a simple closed curve such that ${\cal A} \cup \{\Phi_\sigma\}$ is again an arrangement of pseudocircles (Section~\ref{sec:arr_pseudocicles}).

\paragraph{More related work.}
One of the implications of the results presented in this paper 
concerns so-called saturated drawings~\cite{DBLP:journals/comgeo/KynclPRT15}. 
A simple drawing $D(G)$ of a graph $G$ is called \emph{saturated}
if no edge $e$ from $\overline{G}$ can be inserted into $D(G)$.
Kyn{\v{c}}l et al. showed that there are saturated simple drawings whose number of edges is only linear in the number of vertices~\cite{DBLP:journals/comgeo/KynclPRT15}.
The currently best upper bound on the minimum number of edges in saturated simple drawings is $7n$ and
has been shown by Hajnal et al.~\cite{Hajnal2015saturated}.
A natural question is to determine the complexity of deciding whether a simple drawing is saturated.
Our hardness result implies that the straight-forward idea of testing whether $D(G)$ is saturated
by checking for every edge in $\overline{G}$ whether it can be inserted into $D(G)$
is not feasible unless $\P = \NP$.

The problem of inserting an edge (or multiple edges or a star) into a planar graph has been extensively studied in the contexts of determining the crossing number of the resulting graph~\cite{AddingOneEdge_Cabello_2013,crossingnumbercubic_Riskin_1996} and of finding a drawing of the resulting graph in which the original planar graph is drawn crossing-free and the drawing of the resulting graph has as few crossings as possible~\cite{DBLP:conf/soda/ChimaniGMW09,DBLP:conf/compgeom/ChimaniH16,InsertingEdgePlanar_Gutwenger_2005,DBLP:conf/gd/RadermacherR18}. 
 In relation to our work, a main difference is that we consider inserting edges into some given non-plane drawing of a graph.

Furthermore, the question considered in this paper is strongly related to work on extending partial representations 
of graphs.
Here, we are usually given a representation of a part of the graph $G$ 
and are asked to extend it into a full representation of $G$
such that the partial representation is a sub-representation of the full one.
Recent years have seen a plethora of results in this topic.
For plane drawings Angelini et al.~\cite{DBLP:journals/talg/AngeliniBFJKPR15} showed that the problem can be solved in linear time, 
while Patrignani already proved earlier that the problem is \NP-complete for plane straight-line drawings~\cite{ext_straight_06}. 
For level and upward planar graph drawings the problem was shown to be \NP-complete~\cite{PartialConstrainedLevel_Brueckner_2017,lozzo_compgeo_2020}. 
However, under certain restrictions on the graph and the drawing, the extension problems become tractable~\cite{PartialConstrainedLevel_Brueckner_2017,cegl-dgpwpofpa-12,lozzo_compgeo_2020,ExtendingConvexPartial_Mchedlidze_2015}. 
Very recently, also orthogonal drawings have been considered~\cite{angeliniExtendingPartialOrthogonal2020}.
Extension of other graph respresentations have been studied for several graph classes defined by intersection or visibility of geometric objects~\cite{ContactRepresentationsPlanar_Chaplick_2014,Extendingpartialrepresentations_Chaplick_2019,Chaplick2018,ExtendingPartialRepresentations_Klavik_2012,ExtendingPartialRepresentations_Klavik_2017a,Extendingpartialrepresentations_Klavik_2015,ExtendingPartialRepresentations_Klavik_2017}.
Very recently, the extension problem was also considered for 1-plane drawings
through the lens of parameterized complexity~\cite{eiben_mfcs_2020,eiben_icalp_2020}.

A similar extension problem was studied when the graph class considered are trees.
	Here, we are also given a point-set $P$ and
	ask if the given drawing can be extended using only points in $P$ for vertex positions.
	Di Giacomo et al.~\cite{giacomo_trees_09} showed that this problem is polynomial time solvable if
	bends are allowed.
	Similarly to the case of planar graphs,
	Bagheri and Razzazi~\cite{Planarstraightline_Bagheri_2010} showed that the problem is \NP-complete 
	when we require the extended drawings to be straight-line.

\paragraph{Outline.}
The remainder of our paper is organized as follows.
In Section~\ref{sec:hardness} we prove 
that, given a simple drawing $ D(G) $ of a graph $ G$,
it is \NP-complete to decide whether a given edge $ e $ of $ \overline{G} $ can be inserted into $ D(G) $. 
Furthermore, we discuss under which conditions 
the statement holds. 
Most notably, in Section~\ref{sec:extension}, we show that the problem remains \NP-hard even if the input drawing is pseudocircular.
In contrast, we show in Section~\ref{sec:arr_pseudocicles} that 
for a given arrangement ${\cal A}$  of pseudocircles and a pseudosegment $\sigma$, 
we can decide in polynomial time whether $ \sigma $ can be extended to simple closed curve $\Phi_\sigma$
such that ${\cal A} \cup \{\Phi_\sigma\}$ is again an arrangement of pseudocircles.
Finally, in Section~\ref{sec:fpt}, we observe that 
the problem of deciding whether a given edge $ e $ of $ \overline{G} $ 
can be inserted into a simple drawing $ D(G) $ of a graph $ G $
is \FPT\ 
in the number of crossings of $ D(G) $.

\section{Inserting one edge into a simple drawing is hard}
\label{sec:hardness}

In this section we prove the following theorem containing our main result:
\begin{theorem}\label{thm:oneedgehard}
	Given a simple drawing $ D(G) $ of a graph $ G = (V,E) $ and an edge $ uv $ of $\overline{G}$, it is \NP-complete to decide whether $ uv $ can be inserted into $ D(G) $,
	even if $ V \setminus \{u,v\} $ induces a matching in $ G $ and $ u $ and $ v $ are isolated vertices.
\end{theorem}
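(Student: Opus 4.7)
The plan is to reduce from \TSAT\ (or a planar variant such as planar \TSAT), after first verifying that the problem lies in \NP. For membership in \NP, I observe that, up to isotopy, an inserted curve from $u$ to $v$ is combinatorially determined by the sequence of edges of $G$ that it crosses. Since a simple drawing forces each edge to be crossed at most once, this sequence has length at most $|E|$, so it can be guessed in polynomial size and checked in polynomial time against the rotation system and face structure of $D(G)$.

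For hardness, the intuition is that each matching edge of $G$ acts as a binary ``barrier'' for the inserted curve $uv$: it may or may not be crossed, but never twice. This binary behaviour is a natural vehicle for encoding truth assignments. For each variable $x_i$ I would design a variable gadget: a small sub-arrangement of matching edges of $D(G)$ that creates two topologically distinct ``corridors'' between a designated entry face and exit face, labelled true and false. To enforce that multiple occurrences of the same literal are consistent, the gadgets for $x_i$ would be wired in series along the curve's route, with the corresponding matching edges shared or linked so that the single-crossing restriction forces all occurrences of the literal to be assigned the same value. For each clause $C = \ell_1 \vee \ell_2 \vee \ell_3$ I would build a clause gadget consisting of three matching edges arranged so that the curve can thread the gadget only if it enters through at least one of the three literal corridors in its ``satisfying'' state; simultaneously falsifying all three literals would force the curve to re-cross some matching edge, violating simplicity.

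The global construction assembles these pieces along a backbone from $u$ to $v$, using connector matching edges that channel the curve through the variable gadgets first and then through the clause gadgets in a prescribed order. Correctness then reduces to two routine verifications: given a satisfying assignment of the formula, one exhibits an insertion that respects all gadget choices; conversely, any valid insertion reads off a truth assignment (from which corridor it used in each variable gadget) and the clause gadget analysis shows every clause is satisfied.

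The principal obstacle I expect is the severe restriction that $G$ is a matching plus the two isolated vertices $u$ and $v$. This prevents us from building locally complex sub-drawings: each edge has exactly one partner and no shared endpoints with other edges, so all constraints must be expressed through the topology of how matching edges cross one another and how the inserted curve is forced to weave between these crossings. In particular, implementing a robust consistency mechanism for repeated literal occurrences without auxiliary vertices will likely require a careful ``global'' arrangement whose parity properties lock the corridor choices of all occurrences of the same variable together; getting this to compose cleanly with the clause gadgets, so that the only obstructions to insertion are exactly the unsatisfied clauses, is where the main technical work lies.
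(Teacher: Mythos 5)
Your high-level plan matches the paper's: reduce from \TSAT, treat each matching edge as a single-use barrier, build variable and clause gadgets, and assemble them along a backbone from $u$ to $v$. The membership-in-\NP\ argument is also essentially the one the paper invokes. However, the proposal stops exactly at the points you yourself flag as ``where the main technical work lies,'' and those are precisely the ideas that make the paper's reduction go through. Three concrete gaps stand out.

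First, you have no mechanism that actually \emph{forces} the inserted curve to pass through the gadgets at all. A simple drawing of a matching is very permissive: without further structure, the curve from $u$ to $v$ could simply detour around your whole gadget region. The paper solves this by embedding the construction inside the drawing $\snail$ of Kyn\v{c}l et al.~\cite{DBLP:journals/comgeo/KynclPRT15}, whose defining property is that no edge can be drawn from cell $X$ to cell $Y$. Placing $u$ in $X$ and $v$ in $B_2$ and adding a family $F$ of auxiliary arcs, the paper proves (Lemmas~\ref{lem:boundary} and~\ref{lem:blockers}) that any valid insertion \emph{must} cross $b_2^*$ and every arc of $F$, thereby trapping the curve in the region $R$ where the gadgets sit. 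Without some such ``forcing frame,'' your reduction has no soundness direction.

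Second, your consistency mechanism for repeated literal occurrences is only a wish. The paper's solution is elegant and worth internalizing: each literal occurrence corresponds to a \emph{single} matching edge that passes through both its variable gadget and its clause gadget. In the variable gadget (Lemma~\ref{lem:variable}), the curve must cross all ``positive'' arcs or all ``negative'' arcs; in the clause gadget (Lemma~\ref{lem:clause}), the curve must cross at least one of the three literal arcs. Because each literal arc is a single matching edge, crossing it on the variable side spends its one allowed crossing, and the clause gadget can then only be traversed via a literal whose variable was assigned consistently. Nothing needs to be ``wired in series'' or ``shared''; the sharing is literal identity. Your corridor picture does not obviously compose this way, and you give no argument that it does.

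Third, you omit the clause normalization (Lemma~\ref{clm:transform}): the paper first rewrites every clause so that it has at least one positive and at least one negative literal, or one constant $\texttt{false}$, which is what makes the geometric placement (literals on the left attach to $\ell_1,\ell_2$, literals on the right to $r_1,r_2$, constants to arcs of $F$) consistent and crossing-free. Without some such preprocessing, a clause with three positive literals has nowhere to route its three arcs without violating the matching or the at-most-one-crossing rule. In short, the direction of attack is right and the intuition about barriers is sound, but the proposal is missing the forcing frame, the shared-edge consistency trick, and the clause typing, and as written the correctness argument cannot be completed.
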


It is straightforward to verify that the problem is in \NP\ (see Arroyo et al.~\cite{arroyo_gd_2019} for a combinatorial description of our problem using the dual of the planarization of the drawing). We show \NP-hardness via a reduction from \TSAT. 
Let $ \phi(x_1,\ldots x_{n}) $ be a \TSAT-formula with \emph{variables} $ x_1,\ldots, x_{n} $ and set of \emph{clauses} $\mathcal C = \{C_1,\ldots, C_{m}\}$.
An occurrence of a variable~$ x_i $ in a clause $ C_j \in \mathcal C $ is called a \emph{literal}. 
For convenience, we assume that in $ \phi(x_1, \ldots, x_{n}) $, each clause has three (not necessarily different) literals.
In a preprocessing step, 
we eliminate clauses with only positive or only negative literals 
via the transformation from Lemma~\ref{clm:transform}.

\begin{restatable}{lemma}{claimtransform}\label{clm:transform}
The following transformation of a clause with only positive or only negative literals, respectively, preserves the satisfiability of the clause
	($y$ is a new variable and $\mathtt{false}$ is the constant value false): 
\begin{align*}
		x_i \! \lor \! x_j \! \lor \! x_k \Rightarrow \!
		\begin{cases}
			x_k \! \lor \!  y \lor \mathtt{false} & \! \text{(i)}\\
			x_i \! \lor \!   x_j \! \lor \! \neg y & \! \text{(ii)}
		\end{cases}
		& & \hspace{-2.75mm}
		\neg x_i \! \lor \! \neg x_j \! \lor \! \neg x_k \Rightarrow \!
		\begin{cases}
			\neg x_i  \! \lor  \! \neg x_j  \! \lor  \! y & \! \text{(iii)}\\
			\neg x_k  \! \lor  \! \neg y  \! \lor  \! \mathtt{false} & \! \text{(iv)}
		\end{cases}
	\end{align*}
\end{restatable}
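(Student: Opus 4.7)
The plan is to prove each of the two transformations independently by a short case analysis on the truth value of the fresh variable $y$, showing that every assignment to the variables of the original clause extends to a satisfying assignment of the replacement pair of clauses if and only if it already satisfies the original clause. Since fresh variables are introduced, this immediately lifts to the statement that applying the transformation to any clause of a larger \TSAT\ formula preserves satisfiability of the whole formula.

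For the positive case $x_i \lor x_j \lor x_k$, I first argue the forward direction. Given an assignment $\alpha$ satisfying the clause, if $\alpha(x_k) = \mathtt{true}$ I extend $\alpha$ by setting $y := \mathtt{false}$: clause~(i) is then satisfied via $x_k$ and clause~(ii) via $\neg y$. Otherwise $\alpha(x_i) \lor \alpha(x_j) = \mathtt{true}$, and extending by $y := \mathtt{true}$ satisfies clause~(i) via $y$ and clause~(ii) via whichever of $x_i$ or $x_j$ is true under $\alpha$. For the converse, I assume an assignment $\beta$ satisfying both (i) and (ii) and note that the constant $\mathtt{false}$ never contributes. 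Then clause~(i) forces $\beta(x_k) = \mathtt{true}$ or $\beta(y) = \mathtt{true}$. In the first case the original clause is satisfied directly; in the second case $\beta(\neg y) = \mathtt{false}$, so (ii) forces $\beta(x_i) \lor \beta(x_j) = \mathtt{true}$, and the original clause again holds.

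The negative case $\neg x_i \lor \neg x_j \lor \neg x_k$ is obtained from the positive case by simultaneously negating every literal (and swapping the roles of $y$ and $\neg y$, so that $y$ plays in clause~(iii) the role that $\neg y$ played in clause~(ii), and vice versa for clause~(iv) relative to clause~(i)). I would therefore reproduce the case analysis only briefly rather than repeating it verbatim, noting that the symmetry is exact.

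I do not anticipate any genuine obstacle. The only subtlety is that the constant $\mathtt{false}$ never contributes to satisfying a clause, so clauses~(i) and~(iv) effectively depend on just two literals each, which is precisely what keeps the analysis short. A sanity check I would include is that the fresh variable $y$ appears positively in exactly one of the two replacement clauses and negatively in the other, which is what allows the case split on $\alpha(y)$ in the converse direction to couple the two replacement clauses back into the original disjunction.
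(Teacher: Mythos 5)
Your proof is correct and takes essentially the same approach as the paper: a short case analysis on which literal of the original clause is satisfied (forward direction, choosing $y$ accordingly) together with the observation that the constant $\mathtt{false}$ never contributes (converse direction). The paper merely folds both directions into a single three-way case split, whereas you separate them; the content is identical.
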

\begin{proof}
	We prove the statement for the case in which the original clause has three positive literals; the other case is analogous.
	Assume that $ x_i $ or $ x_j $ satisfies the original clause. 
	Then it also satisfies Clause (ii) and $ y $ can be set to $ \texttt{true} $ to  satisfy Clause (i).
	If $ x_k $ satisfies the original clause, then it also satisfies Clause (i) and $ y $ can be set to $ \texttt{false} $ to satisfy Clause~(ii).
	If none of $ x_i $, $ x_j $, and $ x_k $ satisfy the original clause, 
	then to satisfy Clause (ii) we have to set $ y $ to $ \texttt{false} $,
	which implies that Clause (i) is not satisfied.
\end{proof}

After the preprocessing, 
we have a \emph{transformed} \TSAT-formula where each clause is of one of the following four types:
Type (i) two positive literals and one constant $ \texttt{false} $; 
Type (ii) one negative and two positive literals; 
Type (iii) one positive and two negative literals,
and finally, Type (iv) two negative literals and one constant~$ \texttt{false} $.

Given a transformed \TSAT-formula $ \phi = \phi(x_1, \ldots ,x_n) $ with set of clauses $ \mathcal C = \{C_1, \ldots , C_m\} $,
satisfiability of $ \phi $ will correspond to being able to insert a given edge $ uv $ 
into a simple drawing~$ D $  of a matching constructed from the formula $\phi$.
The main idea of the reduction is that the variable and clause gadgets in $D$ act as ``barriers'' inside a simple closed region $ R $ of $ D $,
in which we need to insert a simple arc $\gamma$ from one side to the other to connect $ u $ and~$ v $. 
Crossing a barrier in some way imposes constraints on how or whether we can cross other barriers afterwards.

To simplify the description, we first focus our attention to the inside of the simple closed region~$ R $.
We assume that $\gamma$ cannot cross the boundary of $R$.
In the following we use two lines, named $\lambda$ and $\mu$,
to bound the regions in which a variable and clause gadget will be placed. 
Particularly, these lines will be identified with opposite segments on $R$'s boundary.

\paragraph{Variable gadget.}
A variable gadget $ W $ is bounded from the left by a vertical line $ \lambda $ and from right by a vertical line $ \mu $.
Additionally, it contains a horizontal segment $\kappa$ between $\lambda$ and $\mu$,
a set~$P$ of pairwise non-crossing arcs (parts of later-defined edges), each with one endpoint on $\kappa$ and the other endpoint on~$\mu$,
and a set $N$ of pairwise non-crossing arcs, each with one endpoint on $\kappa$ and the other endpoint on $\lambda$.
On $\kappa$, all the endpoints of arcs in $P$ lie above
all the endpoints of arcs in $N$, implying that 
every arc in $ P $ crosses every arc in~$ N $. 
Finally, we choose two points $ u $ and $ v $ such that $ u $ is below all arcs in~$ W $ and $ v $ is above them; 
see \figurename~\ref{fig:variable} for an illustration.
The arcs in $ P $ and $ N $ correspond to positive and negative appearances of the variable, respectively.

\begin{figure}
	\begin{minipage}[t]{.48\textwidth}
		\centering
		\includegraphics[page=1]{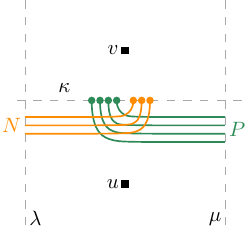}
		\caption{Variable gadget. Orange arcs belong to $ N $, green ones to $ P $.}
		\label{fig:variable}
	\end{minipage}
	\hfill
	\begin{minipage}[t]{.48\textwidth}
		\centering
		\includegraphics[page=1]{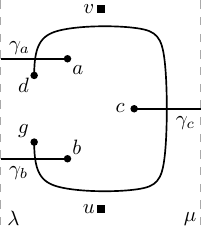}
		\caption{Clause gadget.}
		\label{fig:clause}
	\end{minipage}
\end{figure}

\begin{restatable}{lemma}{lemvariable}\label{lem:variable}
	Let $ W $ be a variable gadget. 
	Any arc between the vertical lines $ \lambda $ and $ \mu $ that connects $ u $ and $ v $
	crosses all arcs in $ P $ or all arcs in $ N $.
\end{restatable}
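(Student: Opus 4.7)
My plan is a Jordan-curve separation argument inside the strip bounded by $\lambda$ and $\mu$. Suppose, for contradiction, that an arc $\gamma$ from $u$ to $v$ in this strip misses some $p \in P$ and some $n \in N$. By construction every arc of $P$ crosses every arc of $N$ exactly once, so $p$ and $n$ meet at a single crossing point $x$. Concatenating the subarc of $n$ from its $\lambda$-endpoint $b_n$ to $x$ with the subarc of $p$ from $x$ to its $\mu$-endpoint $b_p$ yields a simple arc $\alpha \subset p \cup n$ that crosses the strip from $\lambda$ to $\mu$. Closing $\alpha$ by attaching the arcs-free pieces of $\lambda$ and $\mu$ above $b_n$ and $b_p$ together with a piece of the strip boundary exterior to the gadget produces a Jordan curve $J$.

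The heart of the argument is to show that $u$ and $v$ lie in different components of the strip minus $J$. Because $u$ is located to the left of every arc of $W$ and $v$ to the right of every arc, the two points lie on opposite sides of $\alpha$ with respect to the closure chosen above: $u$ lies on the ``$\lambda$-side-past-$b_n$'' component and $v$ on the ``$\mu$-side-past-$b_p$'' component. This can be read directly off \figurename~\ref{fig:variable}, and uses essentially the hypothesis that on $\kappa$ the $P$-endpoints lie above the $N$-endpoints: it is precisely this ordering that forces the interlocking of $P$- and $N$-arcs and fixes the relative positions of $b_n$, $b_p$, $u$, and $v$.

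Once this separation is in place, the contradiction is immediate: $\gamma$ lies in the strip, does not cross $\lambda$ or $\mu$ (since it stays between them), and avoids both $p$ and $n$ by assumption, so $\gamma$ cannot meet $J$; yet its endpoints lie in different components of $\mathbb{R}^2 \setminus J$. I expect the main technical obstacle to be the separation step, and specifically choosing the ``outside'' piece of the strip boundary used to close $\alpha$ into $J$ uniformly over the choice of uncrossed $p$ and $n$. The structural hypotheses of the gadget---the orderings of arc endpoints on $\kappa$, $\lambda$, $\mu$ and the pairwise non-crossing property within $P$ and within $N$---are exactly what lets this choice be made consistently, so that the same Jordan curve construction yields the desired separation regardless of which uncrossed $p$ and $n$ we picked.
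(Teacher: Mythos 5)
Your proof is correct and takes essentially the same approach as the paper: pick an uncrossed $p\in P$ and $n\in N$, observe that they cross by construction, and use a Jordan-curve separation argument to conclude that $p \cup n$ together with $\lambda$ and $\mu$ separates $u$ from $v$, so the arc must cross $p$ or $n$. You spell out the construction of the separating Jordan curve $J$ more explicitly than the paper does (the paper simply asserts the separation), but the underlying argument is identical.
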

\begin{proof}
	Assume that there is an arc connecting $ u $ and $ v $ neither crossing all the arcs in $ P $ nor all the arcs in $ N $.
	Hence, there are two arcs $ p \in P $ and $ n \in N $ such that this arc neither crosses $ p $ nor $ n $.
	By the construction of the gadget, $ p $ and $ n $ cross.
	Thus, their union together with $ \lambda $ and $ \mu $ separates $ u $ from $ v $.
	It follows that the arc has to cross $ p $ or $ n $.
\end{proof}

\paragraph{Clause gadget.} 
Similar to a variable gadget, a clause gadget $ K $ is bounded from the left and right by two vertical lines $\lambda$ and $\mu$, respectively.
Additionally, it contains three horizontal arcs (parts of later-defined edges) $ \gamma_a $, $ \gamma_b $, and $ \gamma_c $,
where the former two have one endpoint on $\lambda$ and the latter has one endpoint on $\mu$.
On~$\lambda$, the endpoint of $ \gamma_a $ lies to the right of the one of $ \gamma_b $.
The other endpoints of $ \gamma_a $, $ \gamma_b $, and $ \gamma_c $ are called $a$, $b$, and $c$, respectively.
None of these three arcs cross.
Moreover, $K$ contains two points $d$ and $g$ and an edge  $ dg $ that crosses $ \gamma_a $, $ \gamma_c $, and $ \gamma_b $ in that order when traversed from $ d $ to $ g $.
Notice that we do not require any specific rotation of the crossings of $ dg $ with $ \gamma_a $ and $ \gamma_b $
(where the rotation is the clockwise order of the endpoints of the crossing arcs). 
However, to simplify the description, we assume that the rotations of the crossings are as in \figurename~\ref{fig:clause}.
The rotation of the crossing of $ dg $ with $ \gamma_c $ is forced by the order of the crossings along $dg$.
Finally, we again choose two points $ u $ and $ v $ such that $ u $ is below all arcs in $ K $ and $ v $ is above them; 
see \figurename~\ref{fig:clause} for an illustration.

\begin{restatable}{lemma}{lemclause}\label{lem:clause}
	Let $ K $ be a clause gadget.
	Any arc $ uv $ between the vertical lines $ \lambda $ and $ \mu $ that connects $ u $ and $ v $ 
	crosses either $dg$ twice or 
	at least one of the arcs $ \gamma_a $, $ \gamma_b $, and $ \gamma_c $.
\end{restatable}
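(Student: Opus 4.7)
The plan is to argue by contradiction via two applications of the Jordan curve theorem. Suppose there is an arc $\pi$ from $u$ to $v$ in the closed strip between $\lambda$ and $\mu$ that avoids each of $\gamma_a,\gamma_b,\gamma_c$ and crosses $dg$ at most once. Let $\alpha,\beta,\delta$ denote the three crossings of $dg$ with $\gamma_a,\gamma_c,\gamma_b$, respectively, in this order along $dg$ from $d$ to $g$.

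First I would establish a parity constraint using the following Jordan curve~$C$: the subarc of $\gamma_a$ from its $\lambda$-endpoint to $\alpha$, then the subarc of $dg$ from $\alpha$ to $\delta$, then the subarc of $\gamma_b$ from $\delta$ back to its $\lambda$-endpoint, and finally the segment of $\lambda$ between these two endpoints. Since $C$ is contained in the closed strip bounded by $\lambda$ and $\mu$, while $u$ lies strictly to the left of $\lambda$ and $v$ strictly to the right of $\mu$, both $u$ and $v$ lie in the (unbounded) outer face of $C$. Hence $\pi$ crosses $C$ an even number of times. By hypothesis $\pi$ avoids $\gamma_a$ and $\gamma_b$, and because $u$ is placed outside the segment of $\lambda$ bounded by the $\lambda$-endpoints of $\gamma_a$ and $\gamma_b$, the arc $\pi$ also does not meet the $\lambda$-portion of $C$. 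All crossings of $\pi$ with $C$ therefore lie on the subarc of $dg$ from $\alpha$ to $\delta$, so $\pi$ crosses this subarc an even number of times. Combined with the assumption of at most one crossing with $dg$ in total, this forces $\pi$ to cross $dg$ exactly zero times.

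Second, to rule out the remaining case (zero crossings of $\pi$ with $dg$ and with every $\gamma$), I would build a second Jordan curve $C'$ using $\gamma_c$, a suitable subarc of $dg$, and one of $\gamma_a,\gamma_b$, closed via pieces of $\lambda$, $\mu$, or the outer boundary of $K$. The rotation of the crossing of $dg$ with $\gamma_c$ is forced by the cyclic order of crossings along $dg$ (as noted in the description of the gadget), and this rotation pins down which side of $dg$ the interior endpoint $c$ lies on. With the appropriate choice of subarcs, $C'$ is arranged so that $u$ and $v$ fall into opposite faces of $C'$. An arc from $u$ to $v$ must then cross $C'$ at least once; but under our standing hypothesis, $\pi$ is forbidden from crossing $\gamma_c$, $\gamma_a$ or $\gamma_b$, $dg$, and (by placement) the boundary pieces of $C'$. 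This yields the required contradiction.

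The main obstacle I expect is the careful bookkeeping of which side of each Jordan curve every relevant point ($u,v,a,b,c,d,g$ and the crossings $\alpha,\beta,\delta$) lies on. These incidences depend on the combinatorial data in the gadget statement together with the specific rotations fixed by Figure~\ref{fig:clause}; in particular, the forced rotation at $\beta$ is exactly what makes the second Jordan curve separate $u$ from $v$. Once the incidences are pinned down, both steps reduce to clean applications of the Jordan curve theorem and a parity count of crossings.
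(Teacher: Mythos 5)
Your approach differs structurally from the paper's, and it contains genuine gaps that would need to be closed. The paper does not split into cases according to how many times the arc crosses $dg$. Instead, it proves the contrapositive directly: assuming the arc avoids $\gamma_a,\gamma_b,\gamma_c$, it writes $dg$ as $d\times\cup g\times$ where $\times$ is the crossing with $\gamma_c$, and exhibits two separating barriers inside the strip -- $\gamma_a\cup\gamma_c\cup d\times$ on one hand, and $\gamma_b\cup\gamma_c\cup g\times$ on the other (each closed off by $\lambda$ and $\mu$). Each barrier separates $u$ from $v$, so the arc must cross $d\times$ and must also cross $g\times$, at necessarily distinct points; this forces two crossings with $dg$ in one stroke, with no parity bookkeeping and no second, unspecified curve.

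The concrete problems with your proposal are as follows. First, you place $u$ strictly left of $\lambda$ and $v$ strictly right of $\mu$; the gadget actually places $u$ and $v$ \emph{inside} the strip, to the left and right of the arcs, respectively, and indeed the lemma restricts the arc to lie between $\lambda$ and $\mu$, which is inconsistent with $u$ lying outside the strip. The claim that both endpoints lie in the unbounded face of $C$ therefore needs a different justification (and it is not automatic once $u$ is inside the strip, since $C$ includes a segment of $\lambda$). Second, your Step~1 does not establish what you claim. The parity argument shows the arc crosses the subarc of $dg$ between $\alpha$ and $\delta$ an even number of times; with the hypothesis of at most one crossing with $dg$, this only forces \emph{zero crossings with that subarc}. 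The single allowed crossing with $dg$ could still occur on $d\alpha$ or on $\delta g$, so "crosses $dg$ exactly zero times" does not follow, and the remaining case you pass to Step~2 is misidentified. Third, Step~2 is only a plan: the Jordan curve $C'$ is not actually specified, the side determinations you acknowledge as "the main obstacle" are not carried out, and the lemma statement deliberately leaves the rotations at $\alpha$ and $\delta$ unconstrained, so any argument that "the forced rotation at $\beta$" alone separates $u$ from $v$ must be made explicit and checked against both possibilities. As written, the proposal is an incomplete sketch rather than a proof, and the paper's two-barrier argument is both simpler and already complete.
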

\begin{proof}
	Let $ \times $ be the crossing point of $ \gamma_c $ and $ dg $.
	This point splits the arc $ dg $ into two arcs $ d\times $ and $ g\times $.
	Assume that the arc $ uv $ does not cross the arcs $ \gamma_a $, $ \gamma_b $, and $ \gamma_c $.
	The union of $ \gamma_a $ and $ \gamma_c $
	together with $ d\times $
	and the lines $ \lambda $ and $ \mu $
	separates $ u $ from $ v $.
	Since the arcs $ \gamma_a $ and $ \gamma_c $ are not crossed by $ uv $,
$ uv $ must cross $ d\times $ in a point $ \times' $. Analogously, the union of $ \gamma_b $, $ \gamma_c $, 
	together with $ g\times $
	and the lines $ \lambda $ and $ \mu $
	separates $ u $ from $ v $.
	Thus, $ uv $ has to cross $ g\times $ in a point  $ \times'' \neq \times' $ to avoid tangencies. This implies that $ uv $ crosses $ dg $ twice, a contradiction.
\end{proof}

\paragraph{The reduction.} 
Let $ \phi(x_1, \ldots ,x_n) $ be a transformed \TSAT-formula with clause set $ \mathcal C = \{C_1, \ldots , C_m\} $ (each clause being of one of the four types identified above).
To build our reduction we need one more gadget.
First, we introduce the following simple drawing introduced by Kyn\v{c}l et al.~\cite[Figure~11]{DBLP:journals/comgeo/KynclPRT15}
and depicted in \figurename~\ref{fig:kyncl}.
Here, we denote this drawing by $ \snail $.
Following the notation by Kyn\v{c}l et al., we denote its six arcs by $a_1$, $a_2$, $a_3$, $b_1$, $b_2$, and~$b_3$; and 
its eight cells by $ X $, $ A_1 $, $ A_2 $, $ A_3 $, $ B_1 $, $ B_2 $, $ B_3 $, and $ Y $; see \figurename~\ref{fig:kyncl} for an illustration. 
The core property $ \mathcal P $ of $ \snail $ is that it is not possible to insert an edge between a point in cell~$ X $ and another point in cell $ Y $ 
such that the result is a simple drawing~\cite[Lemma 15]{DBLP:journals/comgeo/KynclPRT15}.

For our reduction, we first choose two arbitrary points $ u $ and $ v $ in the cells~$ X $ and $ B_2 $ and insert them as vertices into $ \snail $. 
Let  $ \snail' $ be the obtained drawing. 
Further, let $ b_2^* $ be the part of the arc $ b_2 $ between the crossing point of $ b_2 $ and $ a_2 $ and
the crossing point of $ b_2 $ and~$ b_3 $, see again \figurename~\ref{fig:kyncl}.

\begin{figure}
	\begin{minipage}[t]{.41\textwidth}	
		\centering
		\includegraphics[page=1]{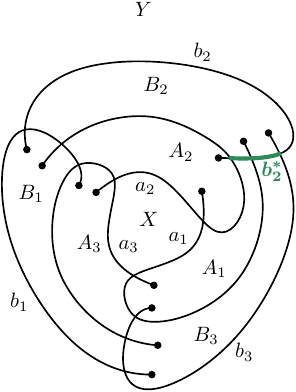}
		\caption{The simple drawing \snail\ presented in~\cite{DBLP:journals/comgeo/KynclPRT15}.
			It is not possible to insert an edge between a point in $X$ and one in $Y$.}
		\label{fig:kyncl}
	\end{minipage}
	\hfill
	\begin{minipage}[t]{.55\textwidth}	
		\centering
		\includegraphics[page=2]{kyncl}
		\caption{A schematic overview of the edges in $F$ (red and orange) and how they are combined with~\snail.}
		\label{fig:overview}
	\end{minipage}
\end{figure}

\begin{restatable}{lemma}{lemboundary}\label{lem:boundary}
	The edge $ uv $ cannot be inserted into $ \snail' $ without crossing $ b_2^* $.
\end{restatable}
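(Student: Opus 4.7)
The plan is to derive a contradiction with property~$\mathcal{P}$ of $\snail$. Suppose for contradiction that some simple arc $\alpha$ from $u$ to $v$ realizes an insertion of the edge $uv$ into $\snail'$ while avoiding $b_2^*$ entirely. I will promote $\alpha$ to a simple arc $\alpha'$ connecting a point in $X$ to a point in $Y$ whose addition to $\snail$ still yields a simple drawing, which directly contradicts $\mathcal{P}$.

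Inspection of Figure~\ref{fig:kyncl} shows that $b_2^*$ lies on the common boundary of cells $B_2$ and $Y$. Since $v \in B_2$ and $\alpha$ avoids $b_2^*$, I would extend $\alpha$ past $v$ by a short arc $\beta$ that travels inside $B_2$ to a chosen point on $b_2^*$, crosses $b_2^*$ transversally in exactly one point, and terminates at a point $w \in Y$. By choosing $\beta$ to leave $v$ in a direction different from the one along which $\alpha$ arrives at $v$, and to stay in a sufficiently small neighborhood of its crossing point on $b_2^*$, one ensures that $\beta$ meets no arc of $\snail$ other than $b_2^*$, that $\beta$ does not meet $\alpha$ a second time, and that no tangencies appear. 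The concatenation $\alpha' := \alpha \cup \beta$ is then a simple arc from $u \in X$ to $w \in Y$.

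What remains is to verify that adding $\alpha'$ to $\snail$ keeps the drawing simple. For every arc $f \neq b_2$ of $\snail$ one has $\alpha' \cap f = \alpha \cap f$, which by assumption on $\alpha$ consists of at most one proper crossing. For $f = b_2$, the arc $\beta$ contributes exactly one new transversal crossing (on $b_2^*$), so it suffices to show that $\alpha$ itself has no crossing with $b_2$ outside $b_2^*$. I expect this to be the main obstacle. I plan to handle it by a topological case analysis on how $\alpha$ traverses the cells of $\snail$ that are incident to $b_2$: since $\alpha$ ends in $B_2$, is assumed simple against every arc of $\snail$, and avoids $b_2^*$, any hypothetical crossing of $\alpha$ with $b_2$ at a point outside $b_2^*$ would, together with the cell structure visible in Figure~\ref{fig:kyncl}, force either a second crossing of $b_2$ or an extra crossing with another arc of $\snail$ in order to bring $\alpha$ back into $B_2$, contradicting the hypothesis that $\alpha$ is a simple insertion. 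Once this step is settled, $\snail \cup \{\alpha'\}$ is a simple drawing containing an edge between $X$ and $Y$, which is forbidden by~$\mathcal{P}$, completing the contradiction.
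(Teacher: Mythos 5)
Your high-level strategy---prolong the arc across $b_2^*$ to land a point in $Y$, then invoke property $\mathcal{P}$---is the same core idea the paper uses. The paper also extends the arc through $b_2$ to reach $Y$ and contradict $\mathcal{P}$. So the route is not genuinely different.

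The genuine gap is exactly where you flagged it, and your proposed way to fill it does not work. You need $\alpha$ to be disjoint from $b_2$, otherwise the concatenation $\alpha\cup\beta$ meets $b_2$ twice and is not simple. Your sketch claims that any crossing of $\alpha$ with $b_2$ outside $b_2^*$ ``forces a second crossing of $b_2$ or an extra crossing with another arc.'' That argument cannot succeed for crossings on the two \emph{terminal} portions of $b_2$, namely the segments of $b_2$ from its endpoint inside $A_2$ to its first crossing and from its last crossing to its endpoint inside $B_1$. Near an endpoint of $b_2$ the arc $b_2$ does not locally separate the surrounding cell, so $\alpha$ can cross $b_2$ there while staying in the same cell; such a crossing neither kicks $\alpha$ out of where it wants to be nor obligates a return crossing of anything. (Also, ``an extra crossing with another arc'' is not by itself a contradiction: $\alpha$ is allowed one crossing with each other arc of $\snail$.) So a hypothetical valid $\alpha$ that avoids $b_2^*$ but crosses $b_2$ once on a terminal segment is not ruled out by your case analysis, and your construction of $\alpha'$ fails on it.

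The paper closes exactly this hole not by arguing impossibility but by a normalization step: a crossing of $\gamma_{uv}$ with $b_2$ inside $A_2$ or inside $B_1$ can be removed by rerouting $\gamma_{uv}$ around the nearby endpoint of $b_2$, yielding another valid insertion that does not cross $b_2$ at all; this reduces to the easy case. Only after that ``without loss of generality'' step does the cell-structure argument (that outside the terminal parts and outside $b_2^*$, one side of $b_2$ is $Y$) pin the crossing down to $b_2^*$. To repair your proof, add this rerouting step: first show you may assume $\alpha$ avoids the terminal parts of $b_2$ (reroute around the endpoint), then your extension through $b_2^*$ goes through.
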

\begin{proof}
Assume for contradiction that $uv$ can be inserted not crossing $ b_2^* $ and let $\gamma_{uv}$ be such an arc. 
	Refer to \figurename~\ref{fig:kyncl}. 
	If~$\gamma_{uv}$  does not cross $b_2$, then we would be able to prolong it 
	and cross $b_2$ to reach $Y$, a contradiction of property $\mathcal{P}$.
	Thus, $\gamma_{uv}$ crosses $b_2$. 
	Further, we may assume without loss of generality that $\gamma_{uv}$ does not cross $b_2$ inside $A_2$ or $B_1$, as otherwise it would be possible to modify
	$\gamma_{uv}$ to not cross $b_2$. 
	Thus, $\gamma_{uv}$ intersects $b_2$ on the boundary of $B_2$.
	Since $\gamma_{uv}$ cannot intersect $Y$, this crossing must be on $ b_2^* $. 
\end{proof}

The final piece we need for our reduction is a set $ F $ of $ m^{I} + m^{IV} + 4 $ arcs that we insert into $ \snail' $,
where $ m^{I} $ is the number of clauses of Type (i) and
$ m^{IV} $ the number of clauses of Type (iv).
For an arc $ f \in F $ we will place one of its endpoints on a vertical line $ \kappa_F $ inside~$ A_2 $
and the other one inside $ B_2 $;
see \figurename~\ref{fig:overview} for an illustration.
The only crossings of $ f $ with $ \snail' $ are with the arcs $ a_2 $, $ a_1 $, $ b_3 $, and $ b_2 $, in that order,
when traversing $ f $ from its endpoint on $ \kappa_F $ to its endpoint in $ B_2 $.
Furthermore, when $ f $ is traversed in that direction, it crosses from $ A_2 $ to $ A_1 $, from $ A_1 $ to $ B_3 $, from $ B_3 $ to $ Y $, and from $ Y $ to $ B_2 $.

Consider the $ m^{I} + m^{IV} + 4 $ endpoints on $ \kappa_F $ sorted from top to bottom. 
We denote by~$f_j$ the arc in $F$ incident with the $j$-th such endpoint. 
When traversing $ b_2 $ from its endpoint in $ A_2 $ to its endpoint in $ B_1 $,
the crossings of arcs in $ F $ with $ b_2 $ appear in the same order as their endpoints on $ \kappa_F $. 
More precisely, the crossings of $ b_2 $, when $b_2$ is traversed in that direction, are with $a_2$, $a_1$, $b_3$,  $f_1$,  $f_2$, \ldots, $ f_{|F|}$, and $b_1$, in that order. 

The arcs $f_{m^{I} + 1}$, $f_{m^{I} + 2}$, $f_{m^{I} + 3}$, and $f_{m^{I} + 4}$ 
will behave differently than the other arcs in~$ F $. 
In the following, we denote these four arcs by $ r_2 $, $ r_1 $, $ \ell_1 $, and $ \ell_2 $, respectively.
There are only two crossings between arcs in $ F $,
namely, between $ r_1 $ and $ r_2 $, and between $ \ell_1 $ and~$ \ell_2 $, 
and both these crossings are inside $ B_2 $. 
These four crossing arcs divide $B_2$ into three regions. 
Let $R$ denote the region with $b_2^*$ on its boundary; 
let $R_r$ denote the (other) region incident with the crossing between  $ r_1 $ and $ r_2 $; 
and let $R_\ell$ denote the (other) region incident with the crossing between $\ell_1$ and $\ell_2$.
Arcs $ r_1 $, $ r_2 $, $ \ell_1 $, and $ \ell_2 $ must be drawn such that the vertex $v$ lies in $R$; 
see the red arcs in \figurename~\ref{fig:overview} for an illustration.
The precise endpoints of the edges in $ F \setminus \{r_1,r_2,\ell_1,\ell_2\} $ will be fixed when we insert the clause gadgets.

\begin{restatable}{lemma}{lemblocker}\label{lem:blockers}
	The edge $ uv $ cannot be inserted into $ \snail' $ without crossing every arc in $ F $ in $ A_1$ or $B_3 $ (in the interior or common boundary of these cells).
\end{restatable}
\begin{proof}
	Assume for contradiction that there is an arc $ f \in F $ such that $ uv $ does not cross $ f $.
	From Lemma~\ref{lem:boundary} we know that $ uv $ has to cross $ b_2^* $.
	Consider the region bounded by $ b_2^* $, $ b_3 $, $ f $, and $ a_2 $. 
Observe that, since $ b_2^* $ is fully contained on the boundary of this region, 
	$ uv $ has to cross at least one of the three other arcs as well.
	By assumption, $ uv $ does not cross $ f $.
	Crossing $ b_3 $ is impossible by property $\mathcal{P}$, as the part contained on this region's boundary separates $ B_3 $ from~$ Y $.
	Finally, crossing the arc which is part of $ a_2 $ is not possible,
	since this would imply the existence of a point $ v' $ in $ A_2 $ such that $ uv $ passes through $ v' $ without having crossed $ a_2 $.
	Hence, we could prolong the arc $ uv' $ that is part of~$ uv $  
	by crossing $ a_2 $ such that it reaches $ B_2 $ without crossing $ b_2^* $, a contradiction to Lemma~\ref{lem:boundary}.
Thus, the statement follows. 
\end{proof}

\begin{figure}
	\centering
	\includegraphics[page=1]{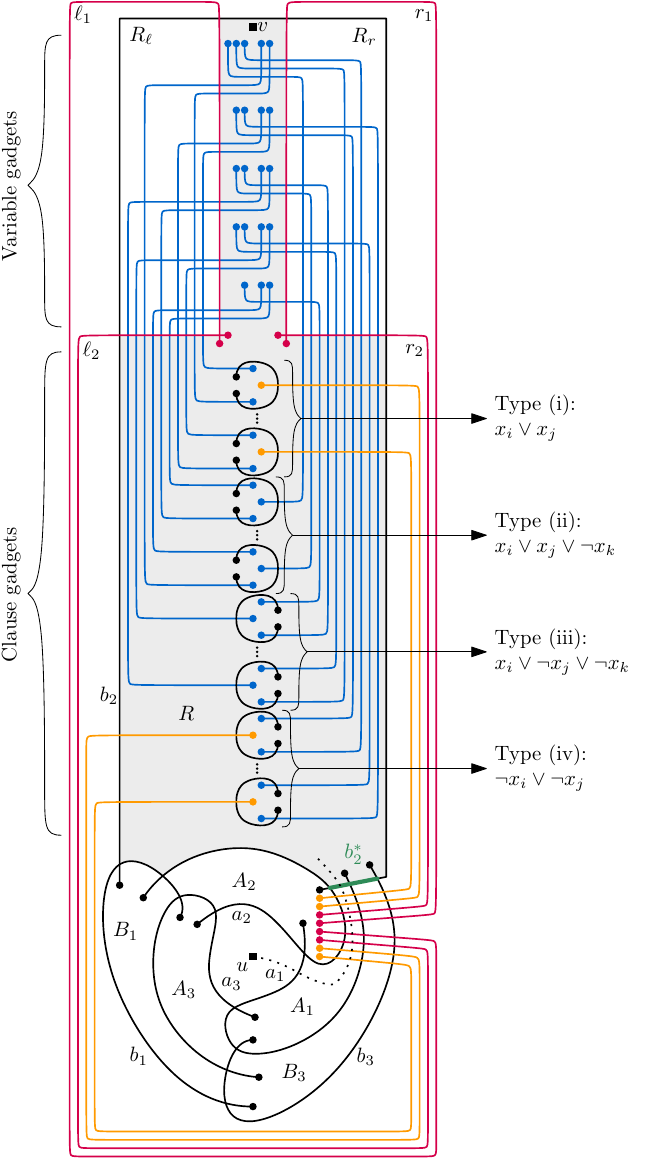}
	\caption{Illustration of the reduction. }
	\label{fig:reduction}
\end{figure}

It remains to insert inside $ R $ the clause and variable gadgets and precisely define the endpoints of arcs in $ F \setminus \{\ell_1,\ell_2,r_1,r_2\}$.
For simplicity, we first insert the variable gadgets and then the clause gadgets. 
The idea is that each clause and variable gadget is inserted in~$R$ separating $ b_2^* $ from $ v $.
This is done by identifying the endpoints that were lying on $ \lambda $ or~$ \mu $ with points on $ \ell_1 $, $ \ell_2 $, $ r_1 $, $ r_2 $, or $b_2$.
As a result, Lemmas~\ref{lem:variable} and~\ref{lem:clause} can be applied to the arc 
that we insert connecting $ u $ and $ v $ in the final drawing,
since it has to cross~$ b_2^* $ by Lemma~\ref{lem:boundary}.

We now insert the variable gadgets into $R$. 
Let $ W^{(i)} $ be the variable gadget corresponding to variable $ x_i $.
For a gadget $ W^{(i)} $, 
the arcs in $ N $  are drawn such that the endpoints on $ \lambda $ 
lie on the part of $ \ell_1 $ that bounds $R$. 
The arcs in $ P $ are drawn similarly, but with the endpoints on $ \mu $ 
lying on the part of $ r_1 $ that bounds~$R$. 
Moreover, we identify vertex $v$ in the gadget with vertex $v$ in~$\snail '$. 
Gadgets corresponding to different variables are inserted without crossing each other. 
We now specify how they are inserted relative to each other. 
As we traverse $\ell_1$ from its endpoint on $\kappa_F$ to its endpoint in $R$,
we encounter the endpoints of arcs in $ W^{(i)} $ before 
the endpoints of arcs in $ W^{(i+1)} $.
Analogously, as we traverse $r_1$ from its endpoint on~$\kappa_F$ to its endpoint in $R$,
we encounter the endpoints of arcs in $ W^{(i)} $ before 
the endpoints of arcs in $ W^{(i+1)} $. 
See \figurename~\ref{fig:reduction} for an illustration.

In a similar way we insert the clause gadgets.
Let $ K^{(j)} $ be the clause gadget corresponding to clause $ C_j $. 
If $C_j$ is of Type~(i), $ K^{(j)} $ is inserted such that 
the endpoints on $\lambda$ lie on the part of $ \ell_2$ that bounds $R$.  
If $C_j$ is the $j'$-th clause of Type~(i), 
we identify $c$ with the endpoint of the arc $f_{j'}$. 
Similarly, if $C_j$ is of Type~(iv), 
$ K^{(j)} $ is inserted such that 
the endpoints on $\lambda$ lie on the part of~$r_2$ that bounds $R$.  
If $C_j$ is the $j'$-th clause of Type~(iv), 
we identify $c$ with the endpoint of the arc $f_{m^{I} + 4 + j'}$. 
If $C_j$ is of Type~(ii), $ K^{(j)} $ is inserted such that 
the endpoints on $\lambda$ lie on the part of $ \ell_2$ that bounds $R$ and 
the endpoint on $\mu$ lies on the part of~$ r_2$ that bounds $R$.  
Similarly, if $C_j$ is of Type~(iii), $ K^{(j)} $ is inserted such that 
the endpoint on $\mu$ lies on the part of $ \ell_2$ that bounds $R$ and 
the endpoints on $\lambda$ lie on the part of~$ r_2$ that bounds $R$.  
The crossings in $R$ of arcs from different clause gadgets 
are of arcs with an endpoint in $r_2$ with arcs in $\{f_j: 1\leq j \leq m^{I}\}$.

We now specify how different clause gadgets  are inserted relative to each other. 
As we traverse $\ell_2$ from its endpoint on $\kappa_F$ to its endpoint in $R$,
we first encounter the endpoints of arcs corresponding to Type~(iii) clauses,
followed by the ones corresponding to Type~(ii) clauses, 
and finally the ones corresponding to Type~(i) clauses.
Analogously, as we traverse $r_2$ from its endpoint on $\kappa_F$ to its endpoint in $R$,
we first encounter the endpoints of arcs corresponding to Type~(iv) clauses,
followed by the ones corresponding to Type~(iii) clauses, 
and finally the ones corresponding to Type~(ii) clauses.
Moreover, as we traverse $\ell_2$ and $r_2$ in the specified directions, 
the endpoints of arcs corresponding to the $j'$-th clause of a certain type 
are encountered before the endpoints of arcs corresponding to the $(j'-1)$-st clause of this type. 
An illustration can be found in \figurename~\ref{fig:reduction}.

Finally, we connect arcs from variable and clause gadgets inside the regions $ R_\ell $ and $ R_r $. 
This is done such that if a literal in a clause is $x_k$ then the corresponding arc in the clause gadget, 
that has an endpoint on $\ell_2$, is connected with an arc in~$N$ of the gadget $W^{(k)}$, 
that has an endpoint on $\ell_1$. 
Thus, these connections can lie in $R_\ell$. 
Analogously, if a literal in a clause is $\neg x_k$ then the corresponding arc in the clause gadget, 
that has an endpoint on $r_2$, is connected with an arc in~$P$ of the gadget $W^{(k)}$, 
that has an endpoint on $r_1$. 
Thus, these connections can lie in $R_r$. 
Since, without loss of generality, we can assume that $R_\ell$ and $R_r$ are convex regions 
and the endpoints we want to connect are pairwise distinct points on the boundaries of those regions, the connections can be drawn as straight-line segments. 
(For visual clarity in \figurename~\ref{fig:reduction} and to argue pseudocircularity in Section~\ref{sec:extension}, we draw these connections with one bend per arc.)
Therefore, there is at most one crossing between each pair of connecting arcs.

Each connecting arc is concatenated with the arcs in a variable and in a clause gadget that it joins. 
These concatenated arcs are edges in our drawing that have one endpoint in a variable gadget and the other one in a clause gadget. 
By construction, each such edge corresponds to a literal in the formula $\phi$ 
and each pair of them crosses at most once.  
Similarly, the arcs in $F\setminus \{\ell_1, \ell_2, r_1, r_2\}$ have one endpoint in 
a clause gadget and also define edges in our final drawing 
that we denote by the same names as the corresponding arcs.

We now have all the pieces that constitute our final drawing. 
It consists of 
(i)  the simple drawing~$\snail'$; 
(ii) the edges $f_i\in F$ drawn as the described arcs (with their endpoints as vertices); 
(iii) the edges corresponding to literals (with their endpoints as vertices); and
(iv) the edges $dg$ in each clause gadget (with $d$ and~$g$ as vertices). 
Observe that the constructed drawing is a simple drawing, 
as it is the drawing of a matching (plus the vertices $u$ and $v$) and, 
by construction, any two edges cross at most once.

It remains to show that the presented construction is a valid reduction.

\begin{restatable}{lemma}{lemcorrectness}
	\label{lem:correctness}
	The above construction is a polynomial time reduction from \TSAT\ 
	to the problem of deciding whether an edge can be inserted into a simple drawing.
\end{restatable}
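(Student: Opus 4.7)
The plan is to establish two things: that the construction is polynomial in size and can be built in polynomial time, and that the formula $\phi$ is satisfiable if and only if $uv$ can be inserted into the constructed simple drawing. The polynomial size part is routine: the preprocessing of Lemma~\ref{clm:transform} at most doubles the number of clauses and adds a linear number of fresh variables, the drawing $\snail'$ has constant size, $|F|=m^{I}+m^{IV}+4$ is linear, and each variable gadget $W^{(i)}$ and clause gadget $K^{(j)}$ has size proportional to the number of literal occurrences it handles; identifications of endpoints and the connecting arcs in $R_\ell\cup R_r$ are similarly linear, so the entire construction can be realized in polynomial time.

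For the forward direction, assume $\phi$ has a satisfying assignment $\alpha$. I would construct an inserting arc $\gamma$ from $u$ to $v$ in stages. First, route $\gamma$ from $u\in X$ through the closure of $A_1\cup B_3$ along a thin corridor that crosses $b_2^*$ exactly once and each $f\in F$ exactly once, entering $R$. Inside $R$, traverse the variable gadgets one by one: in $W^{(i)}$ cross exactly the arcs of $N^{(i)}$ if $\alpha(x_i)=\texttt{true}$, and exactly the arcs of $P^{(i)}$ if $\alpha(x_i)=\texttt{false}$; this is possible because each of $P^{(i)}$ and $N^{(i)}$ consists of pairwise non-crossing arcs bounding a strip that $\gamma$ can sweep across. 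Then traverse the clause gadgets: in each $K^{(j)}$, pick a literal $\lambda$ of $C_j$ that is true under $\alpha$ (not the dummy $\mathtt{false}$ in Types (i) and (iv)) and route $\gamma$ so that it crosses only the clause-part $\gamma_\ast\in\{\gamma_a,\gamma_b,\gamma_c\}$ of the corresponding edge; by the choice of $\alpha$, the variable-part of that edge is not crossed, so the concatenated edge is crossed at most once in total. Finally, end at $v\in R$. A careful stage-by-stage argument confirms that the resulting drawing is simple.

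For the backward direction, suppose $\gamma$ inserts $uv$. By Lemma~\ref{lem:boundary}, $\gamma$ crosses $b_2^*$, and by Lemma~\ref{lem:blockers}, $\gamma$ crosses every $f\in F$ inside the closure of $A_1\cup B_3$; in particular, for Type (i) and Type (iv) clauses the arc $f_{j'}$ identified with $\gamma_c$ is already used up and the $\gamma_c$-part cannot be crossed again. I define an assignment $\alpha$ using Lemma~\ref{lem:variable}: in each $W^{(i)}$, $\gamma$ crosses either all of $P^{(i)}$ or all of $N^{(i)}$; set $\alpha(x_i)=\texttt{false}$ in the first case and $\texttt{true}$ in the second. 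To verify satisfaction, apply Lemma~\ref{lem:clause} to each $K^{(j)}$: $\gamma$ cannot cross $dg$ twice (since $dg$ is a single edge of the drawing), so it crosses some $\gamma_\ast\in\{\gamma_a,\gamma_b,\gamma_c\}$. By the observation above, $\gamma_\ast$ belongs to an edge $e_\lambda$ coming from a genuine literal $\lambda$ of $C_j$; since $\gamma$ crosses $e_\lambda$ in $\gamma_\ast$, it does not cross the variable-part of $e_\lambda$, which by the definition of $\alpha$ means exactly that $\lambda$ is true, so $C_j$ is satisfied.

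The main obstacle I expect is the careful bookkeeping of edge identifications. An ``edge'' in the final drawing is a concatenation of pieces coming from potentially a variable gadget, a connecting arc in $R_\ell$ or $R_r$, a clause gadget, and an element of $F$; the ``each edge crossed at most once'' constraint applies to these concatenated edges, not to the local pieces. The forward construction must therefore be shown to be realizable as an actual simple curve (avoiding tangencies and triple points, and respecting the cyclic order of the gadgets along $\ell_1,\ell_2,r_1,r_2$), while the backward argument must exploit exactly this identification to transfer a crossing of $\gamma_\ast$ inside $K^{(j)}$ to the non-crossing of the corresponding variable-side arc, linking Lemma~\ref{lem:variable} to Lemma~\ref{lem:clause} through the global ``at most one crossing per edge'' bound.
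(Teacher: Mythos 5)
Your overall architecture matches the paper's: polynomiality is immediate from the combinatorial description, the backward direction extracts a truth assignment from a successful insertion via Lemmas~\ref{lem:boundary}, \ref{lem:blockers}, \ref{lem:variable}, and~\ref{lem:clause}, and the forward direction routes $\gamma$ in stages so that in each clause gadget the clause-part of a true literal is crossed while the dummy $\texttt{false}$ arcs (already consumed inside $A_1 \cup B_3$) are avoided. Your closing remark correctly identifies the crux --- the per-edge crossing budget linking clause-gadget crossings to variable-gadget non-crossings.

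There is, however, a sign flip in your truth assignment that breaks the backward direction as written. You set $\alpha(x_i) = \texttt{false}$ when $\gamma$ crosses all of $P^{(i)}$ and $\texttt{true}$ when it crosses all of $N^{(i)}$, but the construction joins a positive literal $x_k$'s clause-arc to an arc of $N$ in $W^{(k)}$ (through $R_\ell$) and a negative literal $\neg x_k$'s clause-arc to an arc of $P$ (through $R_r$). Hence if $\gamma$ crosses the clause-part $\gamma_\ast$ of a positive literal $x_k$ in $K^{(j)}$, the variable-part of that edge lies in $N^{(k)}$ and is uncrossed, so by Lemma~\ref{lem:variable} all of $P^{(k)}$ is crossed --- and under your $\alpha$ this yields $\alpha(x_k) = \texttt{false}$, making the literal you wanted to credit with satisfying $C_j$ actually false. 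The forward direction carries the matching inversion: crossing $N^{(i)}$ when $\alpha(x_i) = \texttt{true}$ blocks precisely the positive-literal arcs you then need free in the clause gadgets. The correct rule, and the one the paper uses, is: all of $P^{(i)}$ crossed $\Rightarrow \alpha(x_i) = \texttt{true}$, otherwise $\texttt{false}$. The likely culprit is the paper's sentence ``The arcs in $P$ and $N$ correspond to positive and negative appearances of the variable, respectively,'' which is inconsistent with the connection rules laid out for $R_\ell$ and $R_r$; you should follow the connection rules, not that sentence.
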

\begin{proof}
	Given a \TSAT\ formula $ \phi(x_1, \ldots , x_n) $ with clauses $ C_1, \ldots , C_m $ 
	we construct a simple drawing $D$ as described in Section~\ref{sec:hardness} and aim to insert the edge $uv$ into it. 
	This construction can clearly be computed in polynomial time and space, 
	since only the combinatorial description of the drawing is needed.  
	
	Assume $ uv $ can be inserted into $ D $ and let $ uv $ be the resulting arc.
	By Lemmas~\ref{lem:boundary} and~\ref{lem:blockers} we know that $ uv $ has to cross $ b_2^* $ and every arc in $ F $. 
	Let $ u^*$ be the point where $ uv $ crosses~$ b_2^* $.
	Each clause and variable gadget separates $ u^* $ from $ v $ and 
	thus, Lemmas~\ref{lem:variable} and~\ref{lem:clause} can be applied.
	This means that in a variable gadget $ W^{(i)} $ all arcs in $ P $ 
	or all arcs in $ N $ are crossed. 	
	In the former case we assign to variable $x_i$ the value \texttt{true}, 
	and otherwise the value \texttt{false}. 
	Assume that this truth assignment does not satisfy $ \phi(x_1, \ldots, x_n) $. 
	Then there exists a clause $ C_j $ for which all three literals evaluate to $ \texttt{false} $. 
	Consider the clause gadget~$K^{(j)}$. 
	By Lemma~\ref{lem:clause} we must cross in it an edge corresponding to one of its literals. 
	However, by Lemma~\ref{lem:blockers} an edge corresponding to the constant value $ \texttt{false} $ 
	cannot be crossed (again) in a clause gadget. 
	By construction and the truth assignment of the variables, 
	the edges corresponding to the other literals of $ C_j $ cannot be crossed either.

	Conversely, assume we are given a satisfying assignment of $ \phi(x_1, \ldots , x_n) $.
	We then can insert $ uv $ into $ D $ as follows.
	Starting from $u$, edge $uv$ crosses
	$ a_1 $ to enter region $ A_1 $, 
	then crosses all arcs in $ F $, 
	and crosses $b_2^*$ to enter $R$; 
	see also the dotted line in \figurename~\ref{fig:reduction}. 
	In each clause gadget, edge $ uv $ crosses 
	one edge corresponding to a literal evaluating to \texttt{true},  
	none corresponding to a literal evaluating to \texttt{false}, 
	and the edge $dg$ in the gadget if necessary. 
	By construction, this leaves in each variable gadget all arcs either in $ P $ or in $ N $ free to be crossed by $ uv $. 
	Moreover, this allows us to connect $ u$ and $ v $ without crossing any edge twice.
\end{proof}

As our reduction from \TSAT\ constructs a simple drawing $D(G)$ of a matching, 
the general problem is \NP-hard even if $G$ is as sparse as possible.
We remark that if we do not require $G$ to be a matching, 
our variable gadget can be simplified by identifying all the vertices on $\kappa$ 
and removing the crossings between edges in~$N$ and~$P$. 
Moreover, from the constructed drawing $D(G)$, one can produce an equivalent instance that is connected:
This is done by inserting an apex vertex into an arbitrary cell of the drawing, 
and then subdividing its incident edges so that the resulting drawing $D^*$ is simple. 
If $uv$ can be inserted into $D(G)$ 
then it can be inserted also into $D^*$\!.
Finally, in the next section we show that the problem remains hard even when the input drawing 
$D(G)$ is a pseudocircular drawing and we are in addition given an arrangement of pseudocircles extending $D(G)$,
regardless of whether the resulting drawing is required to be again pseudocircular or allowed be any simple drawing.

\section{Inserting one edge into a pseudocircular drawing is still hard}
\label{sec:extension}
In this section, we show that the simple drawings produced by our reduction are actually pseudocircular. Hence we obtain the following corollary.

\begin{restatable}{corollary}{corpseudocircle}\label{cor:pseudocircle}
	Given a pseudocircular drawing $ D(G) $ of a graph $ G = (V,E) $ and an edge $ uv $ of $\overline{G}$, it is \NP-complete to decide whether $ uv $ can be inserted into~$ D(G) $,
	even if an arrangement of pseudocircles extending the drawing of the edges in $ D(G) $ is provided.
\end{restatable}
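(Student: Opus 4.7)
The plan is to show that the simple drawing $D$ produced by the reduction of Theorem~\ref{thm:oneedgehard} is already pseudocircular, and to exhibit an explicit arrangement of pseudocircles extending it. Once this is done, Corollary~\ref{cor:pseudocircle} follows at once: the same reduction from \TSAT\ now takes as input a pseudocircular drawing together with its extending pseudocircle arrangement, and membership in \NP\ is inherited from Theorem~\ref{thm:oneedgehard}. Since the edges of $D$ form a matching, any two edges share no endpoint and cross at most once, so the task is to pick, for every edge $e=pq$, a closing arc from $q$ back to $p$ disjoint from the relative interior of $e$, such that for every pair of edges the resulting simple closed curves meet in exactly $0$ or $2$ proper crossings.

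My strategy is to construct the closing arcs layer by layer, mirroring the structure of the reduction. First, I would verify by direct inspection that the snail subdrawing $\snail'$ --- a small fixed drawing with only six arcs $a_1,a_2,a_3,b_1,b_2,b_3$ --- is pseudocircular. Second, I would extend each arc in $F$ by a closing arc routed through appropriate cells of $\snail'$ so that every pair among $F\cup\{a_1,\dots,b_3\}$ gets the right number of extra crossings; the pairs $\{r_1,r_2\}$ and $\{\ell_1,\ell_2\}$ are the only ones within $F$ that require a single additional crossing between the closing arcs. Third, I would extend every remaining edge --- those inside variable and clause gadgets, the edges $dg$, and the connecting arcs in $R_\ell$ and $R_r$ --- by closing arcs routed through the unbounded outer face of $D$; within each gadget, the extension to a local pseudocircle arrangement is again a small combinatorial check by inspection of the figures of Section~\ref{sec:hardness}.

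The main obstacle is coordinating these locally chosen closing arcs to be globally compatible, i.e., to collectively realize exactly one additional crossing for every pair of edges that cross in $D$ and no extra crossings for pairs that do not. This is made tractable by the fact that the unbounded outer face of $D$ is simply connected, so the combinatorics of the pairwise crossings among closing arcs routed through it are fully determined by the cyclic order in which their endpoints appear on the boundary of that face (after, if necessary, pushing each endpoint out to this boundary along an arbitrary dual path). Since the reduction prescribes the placement of each endpoint explicitly, this cyclic order can be read off from the construction, and the closing arcs can then be chosen so as to realize the required pairing of extra crossings. A direct, if somewhat tedious, case analysis then confirms the pseudocircularity of $D$ and produces the desired extending arrangement of pseudocircles.
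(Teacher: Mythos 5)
Your overall plan is the same as the paper's in its first sentence---show the drawing $D$ from the reduction is itself pseudocircular and exhibit a concrete extending arrangement---but the way you propose to construct and verify that arrangement is genuinely different, and as written it has a real gap at its center.

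The gap is in the third step and the ``main obstacle'' paragraph. The blue and purple edges (literal edges) have both endpoints deep inside $R\subset B_2$, so a closing arc routed ``through the unbounded outer face of $D$'' has to leave $R$, traverse several cells of the snail and of the arrangement of $F$-arcs, reach the outer face, return, and re-enter $R$ to hit the other endpoint. Your cyclic-order argument controls only the crossings among the \emph{portions} of closing arcs that lie inside the (simply connected) outer face. It says nothing about (a) the crossings that the ``out'' and ``back in'' portions of each closing arc incur with the green edges $a_1,\dots,b_3$, the arcs in $F$, and other literal edges; or (b) the crossings between the out/in portions of two different closing arcs. Calling the escape route ``an arbitrary dual path'' makes this worse: an arbitrary dual path can cross an edge $e'$ an odd number of times, and then nothing in the outer-face combinatorics can repair the parity for the pair $(e,e')$. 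To make your plan rigorous you would have to specify these escape routes edge by edge and re-derive, for every pair of edge classes, that the total number of crossings is $0$ or $2$; that is precisely the bulk of the work, and it is not reduced by the outer-face observation.

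The paper takes a different and more explicit route: it partitions the edges into colour classes (green snail arcs; red $r_i,\ell_i$; orange constant-$\mathtt{false}$ arcs; black $dg$ arcs; blue/purple literal arcs), routes each extension through the region $A_2$ rather than the outer face (e.g.\ the blue and purple extensions are horizontally mirrored copies of their corresponding edges), and then verifies pseudocircularity with a short, systematic projection argument: after an isomorphism-preserving deformation all non-green pseudocircles become axis-aligned rectangles, and a one-line observation (if the extreme points of the projection of two rectangles belong to different rectangles, they cross at most twice) settles nearly every pair of colour classes at once. This tool is what makes the ``tedious case analysis'' you allude to actually tractable; without something playing that role, your third step remains a plan rather than a proof. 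I would recommend either adopting the paper's mirrored-extension routing plus the rectangle/projection lemma, or, if you want to keep the outer-face idea, pinning down the escape routes precisely and proving the crossing counts class by class.
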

\begin{proof}
	Let $ D $ be a drawing produced by our reduction from \TSAT.
	We divide the edges that correspond to literals of the input \TSAT-formula into 
	the \emph{blue edges} and the \emph{purple edges}.
	The former correspond to positive literals and 
	the latter to negative ones.
	Furthermore, we call the edges corresponding to constant \texttt{false} values the \emph{orange edges} and 
	the four edges $ r_1 $, $ r_2 $, $ \ell_1 $, and $ \ell_2 $ the \emph{red edges}.
	For each clause gadget we find one edge that is not corresponding to a literal or constant \texttt{false} value; we call all these edges the \emph{black edges}.
	Finally, we call the edges in the subdrawing $ \snail $ in $ D $ the \emph{green edges}. 
	
	To complete $ D $ into an arrangement of pseudocircles we have to close every 
	blue, purple, black, orange, red, and green edge by a corresponding \emph{extension}.
	For the six green edges this can be done as shown in \figurename~\ref{fig:pskyncl}.
	The orange and red edges are partitioned into two groups. 
	The first one contains $r_1$, $r_2$ and the orange edges corresponding to \texttt{false} values in clauses of Type~(i). 
	The second one  
	contains $\ell_1$, $\ell_2$ and the orange edges corresponding to \texttt{false} values in clauses of Type~(iv). 
	Inside the region $R$, 
	for both groups the red and the orange extensions are drawn as parallel, pairwise non-intersecting curves 
	between their endpoints in $ R $ and the boundary of the region $ A_2 $; 
	see \figurename~\ref{fig:psstructure}. 
	Additionally, also inside $ R $, for each group the extensions of the two red edges 
	cross all the orange edges in the group.  
	Moreover, the clause gadgets are essentially placed between the red extensions.
	Inside the region~$ A_2 $, 
	for each group the extensions of the two red edges cross and 
	the orange extensions cross the red ones; 
	see again \figurename~\ref{fig:psstructure}.

	\begin{figure}
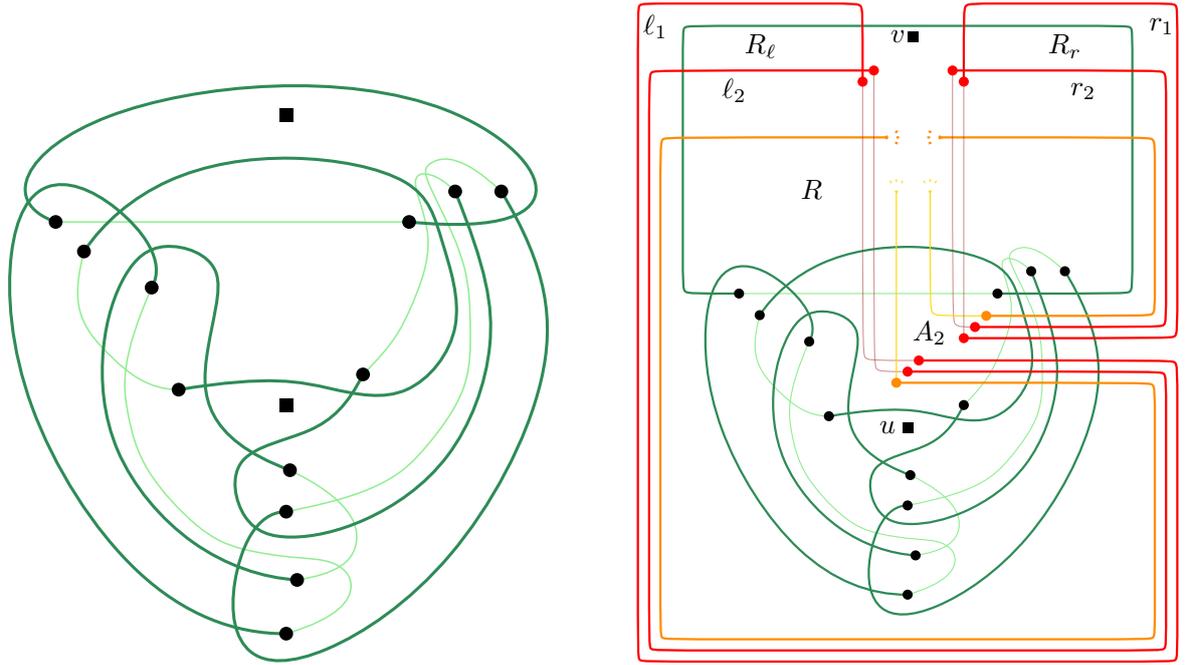

		\centering
		\subfloat[The drawing $ \snail' $ extended to an arrangement of pseudocircles.]{\label{fig:pskyncl}\makebox[.48\textwidth][c]{\includegraphics[width=.35\textwidth,page=3]{reduction}}}
		\hfill
		\subfloat[Extending the red and orange edges to pseudocircles.]{\label{fig:psstructure}\makebox[.48\textwidth][c]{\includegraphics[width=.45\textwidth,page=4]{reduction}}}
		\caption{Extending the gadgets that form the frame of our reduction to an arrangement of pseudocircles.}
		\label{fig:psoverview}
	\end{figure}	
	
	We close the black edges with black extensions by just connecting the endpoints of a black edge
	without producing any additional crossings with the edges of $D$ or with the extensions defined so far.
	It remains to extend the purple and the blue edges.
	An example of a fully extended drawing $ D $ can be seen in \figurename~\ref{fig:pseudocircular}.
	The purple and blue extensions are essentially horizontally mirrored copies of their corresponding edges. 
	In particular, two purple or blue extensions cross if and only if the corresponding purple or blue edges cross.
	Moreover, inside the region $R$, the purple and the blue extensions are drawn without crossings. 
	As we traverse~$\ell_1$ from its endpoint in $A_2$ to its endpoint in $R$,  
	we encounter the (crossing points of) purple extensions of arcs in $ W^{(i)} $ 
	after the blue arcs in $ W^{(i-1)} $
	and before 
	the blue arcs in $ W^{(i)} $. 
	Analogously, as we traverse $r_1$ from its endpoint in $A_2$ to its endpoint in $R$,  
	we encounter the (crossing points of) blue extensions of arcs in $ W^{(i)} $ 
	after the purple arcs in~$ W^{(i-1)} $
	and before 
	the purple arcs in $ W^{(i)} $.
	Furthermore, as we traverse $\ell_2$ from its endpoint in $A_2$ to its endpoint in $R$,  
	we encounter the (crossing points of) purple extensions before the blue arcs. 
	Similarly, as we traverse $r_2$ from its endpoint in $A_2$ to its endpoint in $R$,  
	we encounter the (crossing points of) blue extensions before the purple arcs.

	Let $ D^\circ $ be the arrangement of closed curves constructed from $ D $.  
	It remains to prove that $ D^\circ $ is an arrangement of pseudocircles. 
	We consider the pseudocircles in $ D^\circ $ to have the same color as the edges and extensions that define them. 
	We first show that we can deform 
the purple, blue, black, red, and orange pseudocircles in $ D^\circ $ such that they are all axis-aligned rectangles and the pairwise intersections are preserved. 
Then, to show that two of these rectangles cross at most twice 
	we make use of the next observation:  
	
\setcounter{observation}{0}
	\begin{observation}\label{obs:rectangles_ap}
		Let $\square_1$ and $\square_2$ be two axis-aligned rectangles whose vertices lie in general position (no three are collinear).  
		If the leftmost and rightmost points of the projection of $\square_1 \cup \square_2$ 
		into the horizontal (or vertical) axis correspond to different rectangles, 
		then $\square_1$ and $\square_2$ cross in at most two points. 
	\end{observation}

	\begin{figure}
		\centering
		\includegraphics[page=2]{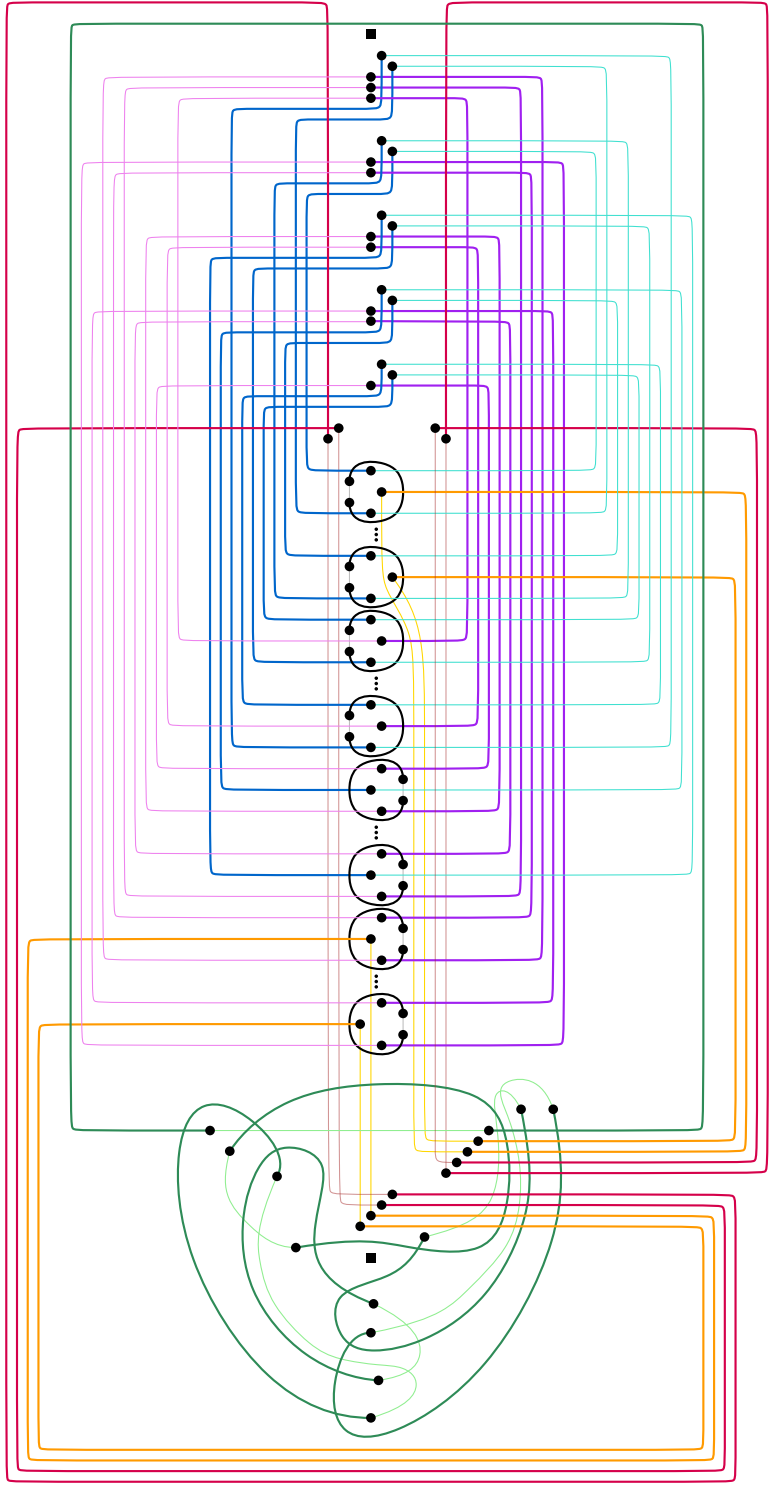}
		\caption{The drawing produced by our reduction is pseudocircular.}
		\label{fig:pseudocircular}
	\end{figure}

We will show that all pseudocircles in $ D^\circ $ except the green ones can be deformed to axis-aligned rectangles while maintaining their intersections with other pseudocircles. 
We refer to Figure~\ref{fig:pseudocircular}. 	

By construction,  the red and orange pseudocircles 
extending the edges in the group of red and orange ones that
contains $ r_1 $ and $r_2$ can be drawn directly as axis-aligned rectangles. 
See the the red and orange pseudocircles on the right side of Figure~\ref{fig:pseudocircular}. 
We deform (the bottom part of) the other orange and red pseudocircles 
such that the resulting pseudocircles are axis-aligned rectangles.
This can easily be done by also deforming 
part of the subdrawing $ \snail'$ of~$ D^\circ $. 

The purple pseudocircles can be drawn directly as axis-aligned rectangles. 
A black pseudocircle $\Phi$ extending a black edge $e$ can trivially be drawn as an axis-aligned rectangle such that $\Phi$ only crosses pseudocircles extending edges that cross $e$. 

We now deform the blue pseudocircles. 
The blue extensions as described above can be drawn such that the resulting 
blue pseudocircles are axis-aligned polygons with one reflex corner 
(between $\ell_1$ and $r_1$). 
For a blue pseudocircle~$\Phi$ drawn in this way, 
let the \emph{corner point} be the reflex vertex of the polygon and
let the horizontal and vertical sides incident with it be the 
\emph{horizontal corner-arc} and the \emph{vertical corner-arc} of $\Phi$, 
respectively. 
To make a blue pseudocircle an axis-aligned rectangle, 
we deform it by
moving its corner point;  
see \figurename~\ref{fig:bpcases}.
Obviously, this does not change the crossings with any green, black, red, or orange pseudocircle.
Furthermore, it does not change the crossings with other blue pseudocircles 
as no new crossings are introduced and 
the crossings along the horizontal corner-arc are preserved. 
Finally, in the same way, this deformation preserves the crossings 
between the blue pseudocircle and purple ones along the vertical corner-arc.

\begin{figure}[tb]
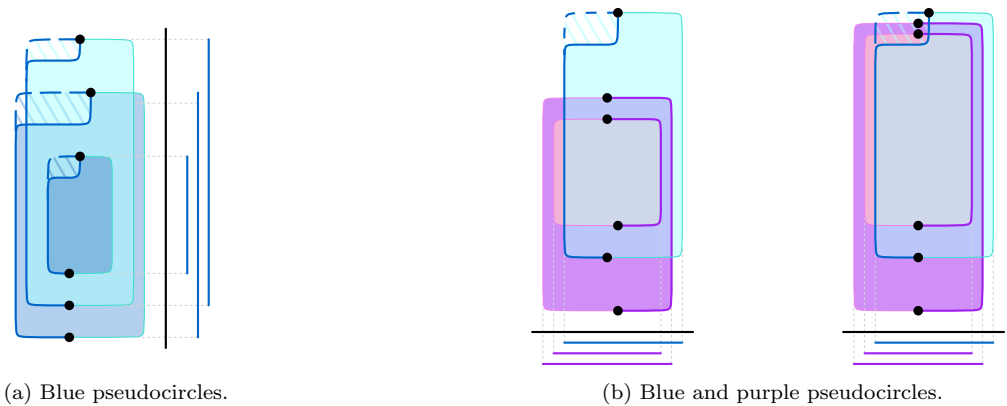

	\centering
	\subfloat[Blue pseudocircles.]{\label{fig:bluecase}\makebox[.3\textwidth][c]{\includegraphics[page=5]{reduction}}
	}\hfill \subfloat[Blue and purple pseudocircles.]{\label{fig:bpnotsamevar}\makebox[.6\textwidth][c]{\includegraphics[page=6]{reduction}}
	}\caption{Interactions between the blue and the purple pseudocircles.}
	\label{fig:bpcases}
\end{figure}

Consider the deformed drawing obtained from $ D^\circ $ maintaining all intersections. 
We now argue that each two pseudocircles cross either zero or two times in this deformed drawing and hence in $ D^\circ $. 
To show that no two blue (or no two purple) rectangles cross more than twice 
we consider their projection onto the vertical axis. 
Then, by construction, two rectangles cross if and only if the topmost and the bottommost points of the projection correspond to different rectangles; see \figurename~\ref{fig:bluecase}. 
Thus, by Observation~\ref{obs:rectangles_ap}, 
in case the two rectangles cross they cross twice. 
For a blue and a purple pseudocircle we find that their projection to the horizontal axis
is always such that the left-most point belongs to the purple extension and 
the right-most point to the blue extension by construction; 
see \figurename~\ref{fig:bpnotsamevar} for an illustration.
From Observation~\ref{obs:rectangles_ap} it follows that each pair of blue and purple rectangles crosses at most twice.

In the same manner we can argue about the red and orange rectangles. 
By construction, two orange rectangles do not cross. 
A red and an orange rectangle are either disjoint 
(if they extend edges in different groups of red and orange ones) 
or the leftmost and rightmost points of their projection onto the horizontal axis 
correspond to different rectangles. 
Thus, from Observation~\ref{obs:rectangles_ap} it follows that each pair of red and orange rectangles crosses at most twice. 
Similarly, 
given a red or orange rectangle and a purple or blue one, 
the leftmost and rightmost points of their projection onto the horizontal axis 
correspond to different rectangles. 
Thus, by Observation~\ref{obs:rectangles_ap}, they cross at most twice.

Given two rectangles, one of them black, 
their projection onto the horizontal or the vertical axis shows that 
either they do not cross or, 
by Observation~\ref{obs:rectangles_ap}, they cross at most twice. 
Finally, it is easy to verify that no red, orange or green pseudocircle 
crosses a green pseudocircle more than twice.
Since by construction no other pseudocircle crosses a green pseudocircle, 
we conclude that $ D^\circ $ is in fact an arrangement of pseudocircles.
\end{proof}

\section{Extending an arrangement of pseudocircles is easy}
\label{sec:arr_pseudocicles}
\begin{figure}[b]
	\begin{minipage}[c]{.48\textwidth}
		\centering
		\includegraphics[page=1]{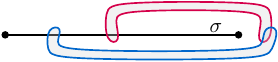}	
		\caption{Obstruction where all pseudocircles intersect $ \sigma $ twice.}
		\label{fig:obstruction_ext_segment_a2}
	\end{minipage}
	\hfill
	\begin{minipage}[c]{.48\textwidth}
		\centering
		\includegraphics[page=2]{obstruction_extending_segment}
		\caption{Obstruction where one pseudocircle intersects $ \sigma $ only once.}
		\label{fig:obstruction_ext_segment_a1}
	\end{minipage}
\end{figure}
In Section~\ref{sec:hardness} we proved that deciding whether an edge can be inserted into a pseudocircular drawing such that the result is a simple (or a pseudocircular) drawing is hard.
In this section we focus on extending arrangements of pseudocircles instead of drawings of graphs. 
Recall that in such an arrangement the restriction is that two pseudocircles can cross at most twice while in a simple drawing the restriction is that two edges share at most one point. 
The main difference in extending arrangements of pseudocircles and simple pseudocircular drawings is that in the latter the crossing possibilities are more restricted: the \emph{edge parts} of two pseudocircles cannot cross twice.

Snoeyink and Hershberger~\cite{sweeping_pseudocircles91} showed that given an arrangement $\mathcal{A}$ of pseudocircles and three points, not all three on the same pseudocircle, 
one can find a pseudocircle $\Phi$ through the three points such that $\mathcal{A}\cup \{\Phi\}$ is again an arrangement of pseudocircles. 
Now, given any arrangement~$\mathcal{A}$ and a pseudosegment $\sigma$ intersecting each pseudocircle in ${\cal A}$ at most twice, 
it is not always possible to extend  $\sigma$ to a pseudocircle $\Phi_\sigma \supset \sigma$ such that  ${\cal A} \cup \{\Phi_\sigma\}$ is again an arrangement of pseudocircles.
Two examples are shown in Figures~\ref{fig:obstruction_ext_segment_a2} and~\ref{fig:obstruction_ext_segment_a1}. 
In both examples any pseudocircle $\Phi_\sigma$ extending $\sigma$  crosses 
one red or blue pseudocircle at least four times.
We show in the following that the extension decision question can be answered in polynomial time:

\begin{theorem}	\label{thm:pseudoeasy}
	Given an arrangement $\mathcal{A}$ of $n$ pseudocircles and a pseudosegment~$\sigma$ intersecting each pseudocircle in ${\cal A}$ at most twice,
	it can be decided in time polynomial in $n$ whether there exists an extension of $\sigma$ to a pseudocircle $\Phi_\sigma$ such that 
	that ${\cal A} \cup \{\Phi_\sigma\}$ is an arrangement of pseudocircles. 
\end{theorem}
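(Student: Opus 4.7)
The plan is to reduce the extension question to a routing problem on the planarization of $\mathcal{A}\cup\{\sigma\}$ and then to a polynomial-time solvable combinatorial problem. Let $p,q$ denote the endpoints of $\sigma$. For each pseudocircle $C_i\in\mathcal{A}$, the required number of crossings between the extending arc $\sigma'$ and $C_i$ is forced to be exactly $2-|\sigma\cap C_i|$ when $|\sigma\cap C_i|\in\{1,2\}$, and is either $0$ or $2$ when $|\sigma\cap C_i|=0$, since $\Phi_\sigma=\sigma\cup\sigma'$ must meet each $C_i$ in at most two points. Thus the original problem reduces to deciding whether there is a simple arc $\sigma'$ from $q$ to $p$, disjoint from $\sigma$ except at its endpoints, realizing these pseudocircle-specific crossing numbers.

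Cutting the plane along $\sigma$ yields a topological disk $D$ whose boundary contains both $p$ and~$q$. In $D$ I would examine the traces of the pseudocircles of $\mathcal{A}$: each $C_i$ crossing $\sigma$ twice becomes two arcs that must be avoided; each $C_i$ crossing $\sigma$ once becomes a single chord from one side of $\partial D$ to the other which separates $p$ from $q$ in $D$ and must be crossed exactly once; and each $C_i$ disjoint from $\sigma$ becomes a closed curve in $D$ to be crossed either zero or two times. The first step of the algorithm is to delete every face of the planarization lying on the forbidden side of any ``twice'' pseudocircle; if this disconnects $p$ from $q$, the instance is infeasible. In the remaining region, for every pseudocircle disjoint from $\sigma$ I would introduce a boolean variable indicating whether $\sigma'$ enters its interior, and encode the pairwise consistency conditions (arising from nestings and mutual crossings of the $C_i$, together with the parity constraints forced by the ``once'' pseudocircles) as a 2-SAT instance of polynomial size. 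Feasibility of this instance decides the existence of a valid combinatorial routing, and the polynomial-time solvability of 2-SAT yields the claimed algorithm.

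The main obstacle is bridging the gap between such a combinatorial routing and an actual \emph{simple} arc $\sigma'$: a walk in the dual that achieves the correct crossing numbers may self-intersect, which is forbidden. Here I would invoke a topological uncrossing argument in the spirit of the proofs of Levi's Enlargement Lemma for pseudolines and its analogue for pseudocircles by Snoeyink and Hershberger~\cite{sweeping_pseudocircles91}: any self-crossing of a candidate $\sigma'$ can be resolved by a local exchange that does not increase the number of intersections with any $C_i\in\mathcal{A}$, while strictly decreasing the monovariant given by the number of self-intersections. Iterating produces a simple arc realizing the desired crossing pattern, establishing the equivalence between the polynomial-time combinatorial decision and the existence of a pseudocircle~$\Phi_\sigma$ extending~$\sigma$ such that $\mathcal{A}\cup\{\Phi_\sigma\}$ is again an arrangement of pseudocircles.
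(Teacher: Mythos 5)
Your proposal shares a correct starting point with the paper's proof: classifying the pseudocircles by the number of their intersections with $\sigma$ (zero, one, or two), observing that the extending arc $\sigma'$ must cross the ``once'' circles exactly once, the ``twice'' circles not at all, and each ``zero'' circle either $0$ or $2$ times, and then deleting the bounded faces enclosed by the ``twice'' circles as an initial forbidden region. After that, however, your approach diverges and contains a genuine gap.

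The 2-SAT step is the problem. You introduce one boolean per $\phi\in\mathcal{C}_0$ (``does $\sigma'$ enter $\int(\phi)$?'') and assert that pairwise consistency conditions, together with unspecified ``parity constraints'' from the $\mathcal{C}_1$ circles, yield a polynomial-size 2-SAT instance whose satisfiability characterizes feasibility. This is not substantiated, and I do not see how to make it work. The booleans record only \emph{whether} $\sigma'$ enters each interior, not \emph{where}: the same set of enter/don't-enter decisions can be compatible with a valid simple route in one geometric arrangement and incompatible in another, because reachability from $u$ to $v$ depends on the cell structure of the arrangement, not on pairwise nesting and crossing patterns of the pseudocircles. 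In particular, a $\mathcal{C}_1$ pseudocircle partitions the accessible region into two pieces that $\sigma'$ must visit in a specific order, and a single crossing may be reachable only through a narrow corridor bounded by several $\mathcal{C}_0$ and $\mathcal{C}_2$ circles simultaneously; such three-or-more-way interactions are not expressible as binary clauses over your chosen variables. You would need to prove both that a satisfying assignment always admits a (possibly self-crossing) curve realizing it, and that the absence of one certifies infeasibility---neither is argued. The trailing uncrossing step would only become relevant if that realizability were established, so it does not close the gap.

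The paper's proof is instead a monotone region-growing argument: it maintains a simply-connected forbidden region $R_i$ with $u,v\in\partial R_i$, and iteratively adds all cells of the subarrangement of $\partial R_i$ and a single pseudocircle $\phi\in\mathcal{C}_0\cup\mathcal{C}_1$ that $\sigma'$ provably cannot reach. The crucial ingredient---absent from your proposal---is the invariant that for every $\phi\in\mathcal{C}_0$, the set $\int(\phi)\cap\overline{R_i}$ remains connected (or empty) throughout the process. This invariant is what guarantees that, once no further cells can be added, routing $\sigma'$ tightly along $\partial R_m$ crosses every $\mathcal{C}_1$ circle exactly once and every $\mathcal{C}_0$ circle at most twice; simplicity of $\sigma'$ then comes for free from the boundary routing, with no uncrossing needed. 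If you want to salvage your framing, you would have to replace the 2-SAT abstraction with something that tracks cell-level reachability and prove an analogue of this connectivity invariant.
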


An arrangement (of pseudocircles) partitions the plane into \emph{vertices} (0-dimensional cells), 
\emph{edges} (1-dimensional cells), and \emph{faces} (2-dimensional cells). 
Since tangencies are not allowed, all vertices are proper crossings.
Note that an arrangement of $n$ pseudocircles has $O(n^2)$ complexity. 
Two arrangements are \emph{combinatorially equivalent} (or, \emph{isomorphic}) 
if the corresponding cell complexes are isomorphic, that is, 
if there is an incidence- and dimension-preserving bijection between their cells. 
The extention problem does not depend on the particular geometry of the arrangement, only on the combinatorial equivalence class. 
Therefore, we can assume that the input is this combinatorial description (of polynomial size in $n$).

\begin{proof}[of Theorem~\ref{thm:pseudoeasy}] 
	Throughout this proof we write $ \overline{R} := \mathbb{R}^2 \setminus R $ 
	for the \emph{complement} of a set $ R \subseteq \mathbb{R}^2 $.
By possibly transforming $\mathcal{A}$ into an isomorphic arrangement
	while preserving the incidences of $\sigma$, 
	we can assume without loss of generality that an endpoint is incident with the unbounded cell and
	that the intersection points of~$\sigma$ with the pseudocircles in $\mathcal{A}$ 
	are all proper crossings.
	Further, by possibly transforming the arrangement again into an isomorphic one, we can assume that $\sigma$ is a horizontal segment with the left endpoint incident with the unbounded cell.  
	Let $u$ and $v$ be the left and right endpoints of $\sigma$, respectively.
	Our algorithm aims to compute a pseudocircle $\Phi_\sigma = \sigma \cup \sigma'$ such that ${\cal A} \cup \{\Phi_\sigma\}$ is an arrangement of pseudocircles, or determine that no such $\sigma'$ exists.
We call $\sigma'$ an \emph{extension} of~$\sigma$.
	
	We partition the set of pseudocircles of $\mathcal{A}$ into three sets $\mathcal{C}_0 $, $\mathcal{C}_1$, and $\mathcal{C}_2$, 
	where for each $i\in\{0,1,2\}$, $\mathcal{C}_i$ is the set of pseudocircles in $\mathcal{A}$ crossing $\sigma$ exactly~$i$ times.
Note that $u$ lies outside all pseudocircles $\phi \in \mathcal{A}$ 
	while $v$ lies outside of all $\phi \in \mathcal{C}_0 \cup \mathcal{C}_2$ and inside all $\phi \in \mathcal{C}_1$,
	that is, each $\phi \in \mathcal{C}_1$ separates $u$ and $v$.
	Further, an extension $\sigma'$ must not cross any $\phi \in \mathcal{C}_2$, it needs to cross every $\phi \in \mathcal{C}_1$ exactly once,
	and it can cross each $\phi \in \mathcal{C}_0$ either twice or not at all.
	
	\begin{figure}
		\begin{minipage}[c]{.48\textwidth}
			\includegraphics[page = 1]{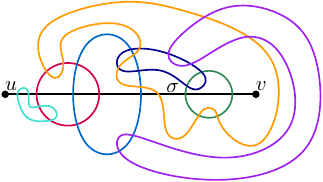}
			\subcaption{Initial arrangement of pseudocircles $\mathcal{A}$ and pseudosegment $\sigma$.}
			\label{subfig:1_ext}			
		\end{minipage}
		\begin{minipage}[c]{.48\textwidth}
			\includegraphics[page = 2]{extending_segment}
			 \subcaption{Simply-connected subset $R_0$.}
			 \label{subfig:2_ext}
		\end{minipage}
	
		\begin{minipage}[c]{.48\textwidth}
			\includegraphics[page = 3]{extending_segment}
			\subcaption{Simply-connected subset $R_1$.}
			\label{subfig:3_ext}
		\end{minipage}
		\begin{minipage}[c]{.48\textwidth}
			\includegraphics[page = 4]{extending_segment}
	 		\subcaption{Simply-connected subset $R_m$ and two possible extensions $\sigma'_1$ and $\sigma'_2$.}
	 		\label{subfig:4_ext}
		\end{minipage}		
		\caption{Algorithm extending $\sigma$ to a pseudocircle $\Phi_\sigma$.}
		\label{fig:algorithm_extending}
	\end{figure} 
	
The idea is to construct a finite sequence $R_0\subset R_1\subset \ldots$ of closed subsets of~$\mathbb{R}^2$, each consisting of cells of $\mathcal{A} \cup \sigma$ that cannot be reached by $\sigma'$. \figurename~\ref{fig:algorithm_extending} illustrates this idea as well as various cases throughout the proof.
	Each set~$R_i$ will be a simply connected closed region of $\mathbb{R}^2$ with both $u$ and $v$ on its boundary. 
Further, we will maintain the following \emph{invariant}: 

\begin{description}
	\item[]\!\!\emph{For each $R_i$ and each $\phi \in \mathcal{C}_0$,
	$\int(\phi) \cap \overline{R_i}$ is either a connected region or empty,}
\end{description}
 where  $\int(\phi)$ denotes the interior of the bounded area enclosed by $\phi$.
The construction will either end by determining that $\sigma$ cannot be extended, or with a set $R_m$ such that routing $\sigma'$ closely along the boundary of $R_m$ gives a valid extension of $\sigma$.

Let  $R'_0$ be the union of $\sigma$ and all the closed disks bounded by the pseudocircles in $\mathcal{C}_2$ and consider the faces induced by $ R'_0 $. 
	Since $u$ is incident with the unbounded cell of $R'_0 $, and since $\sigma'$ must not intersect the interior of~$R'_0$,
	$\sigma'$ cannot reach any bounded face of  $R'_0 $. 
Let $R_0$ be the closure of the union of these bounded faces and $\sigma$. 
	We may assume that $v\in \partial R_0$, as otherwise no extension~$\sigma'$ exists and we are done.

	To see that the {invariant} holds for $ R_0 $, assume that there exists a pseudocircle $\phi \in \mathcal{C}_0$ such that $\int(\phi) \cap \overline{R_0}$ is connected; 
	see \figurename~\ref{fig:invariant_R0} for an illustration. 
	Note that $\int(\phi) \cap \overline{R_0}$ is connected if and only if $R_0 \setminus \int(\phi)$ is connected.
	As $\phi$ does not intersect $\sigma$, there exists a component $D$ of $R_0 \setminus \int(\phi)$ that is disjoint from $\sigma$. 
	Further, as $\int(\phi)$ is simply connected, $D \cap \partial R_0 \neq \emptyset$.
	Moreover, any point $x$ on $\partial D \cap \partial R_0$ lies on some pseudocircle $\phi_x \in \mathcal{C}_2$.
	On the other hand, any path in $R_0$ from a point of $\sigma$ to $x$ must enter and leave $\int(\phi)$ and hence intersect $\phi$ at least twice. As $\phi_x$ intersects $\sigma$ twice and lies in $R_0$, 
we get that $\phi_x$ intersects $\phi$ in at least four points, a contradiction.
\begin{figure}
		\centering
		\includegraphics{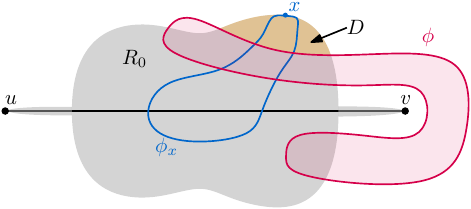}
		\caption{Proving that $R_0$ fulfills the invariant.}
		\label{fig:invariant_R0}
	\end{figure}

	For the iterative step, consider the arrangement $\mathcal{A}^{\phi}_i$ formed by $\partial R_i$ and a pseudocircle $\phi \in \mathcal{C}_0 \cup \mathcal{C}_1$,
	and the cells of it that lie in $\overline{R_i}$.
	If $\phi \in \mathcal{C}_1$ and an extension $\sigma'$ exists, then the only two such cells that can be intersected by~$\sigma'$ are the ones incident to $u$ and $v$, respectively.
	Similarly, if $\phi \in \mathcal{C}_0$, then $\sigma'$ can only intersect the cell(s) incident to $u$ and $v$, 
	plus the (by the {invariant}) unique cell $\int(\phi) \cap \overline{R_i}$.
	In both cases, all other cells of this arrangement should be added to the forbidden area. 
	We denote all cells  $\mathcal{A}^{\phi}_i \cap \overline{R_i}$ that can possibly be intersected by $\sigma'$ as \emph{reachable} (by $\sigma'$) and all other cells as \emph{unreachable} (by~$\sigma'$).

Assume that there exists some pseudocircle $\phi \in \mathcal{C}_0 \cup \mathcal{C}_1$ 
	such that the arrangement~$\mathcal{A}^{\phi}_i$ of $\phi$ and $\partial R_i$ contains unreachable cells.
Then we obtain $R'_{i+1}$ by adding all those cells to $R_i$. 
	If $v$ lies in a bounded region of $\overline{R'_{i+1}}$, then no extension~$\sigma'$ exists and we are done. 
	(Recall that by assumption $u$ always lies in the unbounded region.)
	Otherwise, $R_{i+1} = R'_{i+1}$ is a simply connected region that has both $u$ and $v$ on its boundary.
It remains to show that the {invariant} is still maintained for $R_{i+1}$. 

\begin{restatable}{lemma}{lemextendoneedge}\label{lem:extendoneedge}
	If $R_i$ fulfills the invariant and $u$ and $v$ both lie in the unbounded region of $\overline{R'_{i+1}}$ then $R_{i+1}$ also fulfills the invariant.
\end{restatable}
\begin{proof}
Let $\phi \in \mathcal{C}_0 \cup \mathcal{C}_1$ be the pseudocircle that causes the step from $R_i$ to $R_{i+1}$
	and consider the arrangement~$\mathcal{A}^{\phi}_i$ of $\phi$ and $\partial R_i$ (which contains unreachable cells).
Note that the boundaries of all cells of $\mathcal{A}^{\phi}_i$ alternate between arcs of $\phi$ and parts of $\partial R_i$.
	Moreover, all cells of $\mathcal{A}^{\phi}_i$ in $R_{i+1} \setminus R_i$ are bounded.

	We first consider the case that $\phi \in \mathcal{C}_0$. 
	It is illustrated in \figurename~\ref{fig:invariantC0}.
Suppose that there exits a pseudocircle $\phi'\in\mathcal{C}_0$ for which $\int(\phi') \cap \overline{R_{i+1}}$ is disconnected while $\int(\phi') \cap \overline{R}_{i}$ is connected. 
	Observe that $\phi'\neq \phi$ because all the cells of $\mathcal{A}^{\phi}_i$ that are added to $R_i$ for obtaining $R_{i+1}$ lie outside $\phi$. 
	Since $R_i$ fulfills the invariant, each cell of $\mathcal{A}^{\phi}_i$ in $R_{i+1} \setminus R_i$ is bounded by a single arc of $\phi$ and a single arc of $\partial R_i$ and all those cells are pairwise disjoint. 
	Hence there exists at least one such cell $c$ that disconnects $\int(\phi') \cap \overline{R}_{i}$, and the boundary of $c$ along $\phi$ intersects $\phi'$ (at least) twice.
	Recall that $c$ is bounded and to the exterior of~$\phi$. 
	If $\phi'$ was only intersecting $\phi$ at those two points, 
	the boundary of $\phi'$ outside $c$ would be completely contained in $\int (\phi)$, but then $c$ would not disconnect $\int(\phi') \cap \overline{R}_{i}$. 
Thus, 
	$\phi$ must intersect $\phi'$ in at least two more points, a contradiction.
	
	Now consider the case $\phi \in \mathcal{C}_1$.
	For an illustration consider \figurename~\ref{fig:invariantC1}.
	Assume again that there exists a pseudocircle $\phi' \in \mathcal{C}_0$ for which $\int(\phi') \cap \overline{R_{i+1}}$ is disconnected while $\int(\phi') \cap \overline{R}_{i}$ is connected.
	Consider again a cell $c$ of $\mathcal{A}^{\phi}_i$ that is part of $R_{i+1} \setminus R_i$ and disconnects $\int(\phi') \cap \overline{R}_{i}$.
	The cell $c$ must not contain any of $u$ and $v$ as otherwise it would not be in $R_{i+1}$.
	Further, the cell $c$ cannot separate $u$ and $v$, as otherwise $v$ would have been in a bounded region of $R_{i+1}'$ and we would have stopped the process. 
As $c$ disconnects $\int(\phi') \cap \overline{R}_{i}$, 
	$\phi$ intersects $\phi'$ twice along the boundary of $c$ (and hence outside $R_i$).
As every pair of pseudocircles have at most two intersection points, 
	$\phi$ does not intersect $\phi'$ in any other points.
	Especially, $\phi$ does not intersect $\phi'$ inside $R_i$. 
	Furthermore, 
	$\phi$ intersects $\partial R_i$ in $\int(\phi')$ at least twice along $\partial c$ (causing the disconnection of  $\int(\phi') \cap \overline{R}_{i}$) and $\phi$ also intersects $\partial R_i$ outside of $\phi'$ (as it must intersect $\sigma$ and $\phi'$ cannot intersect $\sigma$). 
	This last property implies that each 
	component of $\int(\phi') \cap \overline{R_{i+1}}$ induced by $c$ lies in a different reachable cell of $\mathcal{A}^{\phi}_i$ that is neighboring to $c$ via an arc of $\phi$. 
	However, as $c$ does not separate $u$ and $v$, at most one such cell can exist, a contradiction to $\int(\phi') \cap \overline{R_{i+1}}$ being disconnected.
\end{proof}
\begin{figure}
		\subfloat[]{\label{fig:invariantC0}\makebox[.45\textwidth][c]{\hspace{-1.0cm}\includegraphics[page=1]{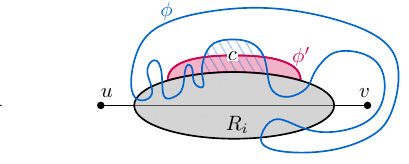}}}\hfill
\subfloat[]{\label{fig:invariantC1}\makebox[.45\textwidth][c]{\includegraphics[page=2]{invariant2}}}\caption{Illustration of potentially separating cells in the proof of Lemma~\ref{lem:extendoneedge}. 
			The red arc and area belong to $ \phi' \in \mathcal C_0 $, the blue striped area is the cell $ c $, the blue curve is the pseudocircle $ \phi \in \mathcal C_0 \cup \mathcal C_1$.}
		\label{fig:invariant}
	\end{figure}

	Now assume that both $u$ and $v$ lie on the boundary of all sets $R_i$ constructed in this way.
	Then the iterative process stops with a set $R_m$ where for each $\phi \in \mathcal{C}_0 \cup \mathcal{C}_1$, 
all cells in the arrangement $\mathcal{A}^{\phi}_m$ of~$\phi$ and $\partial R_m$ 
	that are contained in~$\overline{R_m}$ are reachable by $\sigma'$.
	Note that $m = O(n^2)$ as $\mathcal A$ has $O(n^2)$ cells, 
	in every iteration $i$ at least one cell of~$\mathcal A$ has been added to $R_i$, and each cell of~$\mathcal{A}$ is added at most once.
	Consider a path $P$ from $u$ to $v$ in $\overline{R_m}$ that is routed closely along the boundary $\partial R_m$ (note that there are two different such paths).
Then for any $\phi \in \mathcal{C}_1$, $P$ intersects exactly two cells of $\mathcal{A}^{\phi}_m$, 
	namely, the ones incident to $u$ and~$v$, respectively. 
	Hence $P$ crosses $\phi$ exactly once.
	Similarly, for any $\phi \in \mathcal{C}_0$, the path $P$ intersects at most three cells of $\mathcal{A}^{\phi}_m$, 
	namely, the one(s) incident to $u$ and~$v$ plus possibly the cell $\int(\phi) \cap \overline{R_m}$, 
	which is one cell by the {invariant}. Hence $P$ crosses~$\phi$ at most twice.
Thus $\sigma'=P$ is a valid extension for~$\sigma$, which completes the correctness argument.
	
Note that computing $R_0$ and $\sigma'$ (in case that the algorithm didn't terminate with a negative answer before) can be done in polynomial time.
	Also, for each~$ R_i $ and each $ \phi \in \mathcal C_0 \cup \mathcal C_1 $, the set of unreachable cells
	of $ \mathcal{A}^{\phi}_i $ can be determined in polynomial time.
	As we have $ O(n^2) $ iteration steps, we can hence compute $ R_{m} $ from $ R_0 $ (or determine that $ \sigma $ is not extendible) in polynomial time, which concludes the proof.
\end{proof}

As an immediate consequence of Theorem~\ref{thm:pseudoeasy} we have the following result:

\begin{corollary}
Given an arrangement $\mathcal{A}$ pseudocircles and a pseudosegment~$\sigma$, 
it can be decided in polynomial time whether $\sigma$ can be extended to a pseudocircle $\Phi_\sigma \supset \sigma$ such that ${\cal A} \cup \{\Phi_\sigma\}$ is an arrangement of pseudocircles.
\end{corollary}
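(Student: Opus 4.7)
The plan is to reduce the corollary to the setting of Theorem~\ref{thm:pseudoeasy} by a trivial preprocessing step that verifies whether $\sigma$ is even locally compatible with the arrangement $\mathcal{A}$. The key observation is that any pseudocircle $\Phi_\sigma \supset \sigma$ inherits every intersection of $\sigma$ with any $\phi \in \mathcal{A}$. Hence for $\mathcal{A} \cup \{\Phi_\sigma\}$ to be an arrangement of pseudocircles, $\sigma$ must already meet each $\phi \in \mathcal{A}$ in at most two points, and every such intersection must be a proper crossing rather than a tangency.

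First, I would iterate over all pseudocircles $\phi \in \mathcal{A}$ and check these two local conditions, which can be done in polynomial time since the combinatorial intersection pattern of $\sigma$ with the given arrangement is part of the input (or can be computed from it in polynomial time). If for some $\phi$ the pseudosegment $\sigma$ meets $\phi$ in three or more points, or shares a tangency with $\phi$, then no valid extension can exist, and the algorithm immediately outputs ``no''.

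Otherwise, $\sigma$ intersects every pseudocircle of $\mathcal{A}$ in at most two points, all of them proper crossings, so the hypothesis of Theorem~\ref{thm:pseudoeasy} is satisfied. I would then simply invoke that theorem on $\mathcal{A}$ and $\sigma$ to decide in polynomial time whether a desired extension $\Phi_\sigma$ exists. Since both the preprocessing check and the invocation of Theorem~\ref{thm:pseudoeasy} run in time polynomial in $n$, the overall procedure is polynomial. There is no real obstacle here: all the work has already been carried out in the proof of Theorem~\ref{thm:pseudoeasy}, and the corollary merely absorbs the ``at most twice'' hypothesis into the algorithm by rejecting in advance those inputs that clearly cannot admit any extension.
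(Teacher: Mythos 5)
Your proposal is correct and is essentially the approach the paper intends: the paper presents this corollary as an ``immediate consequence'' of Theorem~\ref{thm:pseudoeasy}, and the only missing step is exactly the preprocessing you describe --- rejecting inputs where $\sigma$ already violates a necessary local condition (crossing some $\phi$ three or more times, or touching some $\phi$ tangentially), since any $\Phi_\sigma \supset \sigma$ inherits those intersections and hence cannot yield a valid arrangement of pseudocircles.
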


\section{An FPT-algorithm for bounded number of crossings}
\label{sec:fpt}
In this section we show that for drawings with a bounded number of crossings 
it can be decided in \FPT-time whether an edge can be inserted. 
Given a simple drawing $ D(G) $ with $k$ crossings,
one can construct a \emph{kernel} of size $ O(k) $ by exhaustively removing isolated vertices and uncrossed edges from $ D(G) $.
For a simple drawing $D(G)$ of a graph $ G = (V,E) $ and $ e \in E $, let
$ D(G - e) $ be the subdrawing of $ D(G) $ without the drawing of $ e $.
Similarly, for an isolated vertex $ u \in V $ let $ D(G - u) $ be the subdrawing of $ D(G) $ 
without the drawing of $ u $.

\begin{observation}\label{obs:isolated}
	Given a simple drawing $ D(G) $ of a graph $ G = (V,E) $ and an isolated vertex $w \in V$, an edge $ uv $ of $\overline{G}$ can be inserted into~$ D(G) $ if and only if $ uv $ can be inserted into $ D(G - w) $.
\end{observation}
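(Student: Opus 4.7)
\textbf{Proof plan for Observation~\ref{obs:isolated}.}
The plan is to prove both implications separately, observing first that we may assume $w \notin \{u,v\}$: otherwise $u$ or $v$ would not be a vertex of $G-w$ and the right-hand side of the equivalence would be vacuous, while on the left the edge $uv$ still needs $u$ and $v$ present.

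For the forward direction, suppose $uv$ can be inserted into $D(G)$ and let $D(G+uv)$ be a simple drawing extending $D(G)$. Since $w$ is isolated, no edge of $G+uv$ is incident with $w$, so removing the point representing $w$ from $D(G+uv)$ yields a simple drawing of $(G-w)+uv$ whose restriction to $G-w$ equals $D(G-w)$. Hence $uv$ can be inserted into $D(G-w)$.

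For the backward direction, let $D'$ be a simple drawing of $(G-w)+uv$ extending $D(G-w)$, and let $\gamma_{uv}$ denote the arc in $D'$ representing $uv$. We want to obtain a simple drawing of $G+uv$ extending $D(G)$ by placing $w$ back at the point $p_w \in \mathbb{R}^2$ it occupied in $D(G)$. The only obstacle is that $\gamma_{uv}$ might pass through $p_w$, which is forbidden since simple drawings do not allow an edge to contain a vertex in its relative interior. To remove this obstacle, I would take a sufficiently small open disk $B$ around $p_w$ that is disjoint from all vertices of $D'$ other than $u$ and $v$ (which we may assume by choosing $B$ small), and from every edge of $D'$ except possibly $\gamma_{uv}$; such a $B$ exists by compactness and finiteness of the drawing. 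Inside $B$ one can locally reroute $\gamma_{uv}$ around $p_w$ via a small detour, which does not introduce any new crossings with other edges (none enter $B$) and preserves all existing crossings of $\gamma_{uv}$.

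The only \emph{potential} subtle point is ensuring that the local reroute does not create a second crossing of $\gamma_{uv}$ with itself or with any edge of $D(G-w)$; but since no edge of $D(G-w)$ meets $B$ and $\gamma_{uv} \cap B$ is a single arc through $p_w$ (a transverse passage, after a generic choice of $B$), replacing this sub-arc by a nearby sub-arc avoiding $p_w$ is routine. The resulting drawing is a simple drawing of $G+uv$ extending $D(G)$, which concludes the argument.
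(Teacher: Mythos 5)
The paper states Observation~\ref{obs:isolated} without proof, treating it as immediate; your write-up supplies the missing justification, and the argument is correct. The forward direction (delete the isolated point from the extended drawing) is indeed trivial, and you correctly identify the one non-trivial point in the backward direction: the arc $\gamma_{uv}$ realizing $uv$ in $D(G-w)$ could pass through the location $p_w$ of $w$, which would violate the requirement that no edge contains a vertex in its relative interior. Your local-rerouting fix is sound: since $D(G)$ is simple and $w$ is isolated, no edge of $D(G-w)$ passes through $p_w$, so a sufficiently small disk $B$ around $p_w$ avoids all vertices and all edges other than $\gamma_{uv}$, and replacing $\gamma_{uv}\cap B$ by a nearby sub-arc inside $B$ avoiding $p_w$ introduces no new crossings with any edge and keeps $\gamma_{uv}$ simple. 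One cosmetic remark: once you have fixed $w\notin\{u,v\}$, you can choose $B$ disjoint from \emph{all} vertices of $D'$, including $u$ and $v$; the parenthetical ``other than $u$ and $v$'' is unnecessary and slightly misleading.
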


By Observation~\ref{obs:isolated} we get that isolated vertices can be disregarded in an algorithm
that extends a simple drawing $D(G)$ of a graph by one edge. The following lemma implies that the same is true for uncrossed edges in $ D(G)$.

\begin{restatable}{lemma}{lemuncrossed}\label{lem:uncrossed}
	Given a simple drawing $ D(G) $ of a graph $ G = (V,E) $ and an edge $ e \in E $ that is uncrossed in $ D(G) $,
	an edge $ uv $ of $\overline{G}$ can be inserted into~$ D(G) $ if and only if $ uv $ can be inserted into $ D(G - e) $.
\end{restatable}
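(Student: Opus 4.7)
The forward direction is immediate: given a simple drawing of $G+uv$ extending $D(G)$, removing $e$ yields a simple drawing of $(G-e)+uv$ extending $D(G-e)$. I focus on the converse. Suppose $uv$ can be inserted into $D(G-e)$ and let $\gamma$ be the drawn arc. Reinstating $e$ produces a topological drawing of $G+uv$ in which $\gamma$ crosses each edge of $G-e$ at most once (by simplicity of the insertion), no edge of $G-e$ crosses $e$ (since $e$ is uncrossed in $D(G)$), and $\gamma$ crosses $e$ transversally in some number $k\geq 0$ of points. If $k\leq 1$ we are done; otherwise I iteratively reduce $k$ by locally rerouting $\gamma$.

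The key ingredient is the existence, when $k\geq 2$, of an \emph{empty bigon}: a closed disk $B$ bounded by a subarc $\gamma'$ of $\gamma$ and a subarc $e'$ of $e$ which share exactly their two endpoints $q_A,q_B\in \gamma\cap e$ and which contain no crossing of $\gamma$ and $e$ in the relative interiors of $\gamma'$, $e'$, or $B$. To establish this I would consider a bigon minimizing enclosed area: if some $\gamma$-$e$ crossing $q$ lay on the interior of $\gamma'$, following $e$ from $q$ into $B$ (and using that $e$ cannot cross itself) would force $e$ to leave $B$ through $\gamma'$ at another crossing, producing a strictly smaller bigon inside $B$; the symmetric argument rules out crossings in the interior of $e'$, and a crossing in $\mathrm{int}(B)$ would force a crossing through $\mathrm{int}(e')$, reducing to the previous case. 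In an empty bigon, $q_A$ and $q_B$ are thus consecutive crossings along both $\gamma$ and $e$, and a standard Jordan-curve argument at $q_A$ and $q_B$ shows that $\gamma\setminus\gamma'$ lies entirely outside~$B$.

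Given the empty bigon, I plan to replace $\gamma'$ by a new arc $\gamma^*$ running in a thin tubular neighborhood $T$ of $e'$ on the side of $e'$ exterior to $B$, splicing it smoothly to $\gamma\setminus\gamma'$ near $q_A$ and $q_B$. Because $\gamma$ crosses $e$ transversally at $q_A$ and $\gamma'$ enters $B$ there, the subarc of $\gamma$ continuing past $q_A$ outside $\gamma'$ must approach $q_A$ from the side of $e$ exterior to $B$, which is the same side as $\gamma^*$; the analogous statement holds at $q_B$. Thus the splice eliminates both crossings at $q_A$ and $q_B$ without introducing any new crossing of $\gamma$ with~$e$, strictly decreasing $k$ by two.

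The main obstacle is to argue that $T$ can be chosen narrow enough to preserve simplicity everywhere. Because $e$ is uncrossed in $D(G)$, every edge $f\in E\setminus\{e\}$ is a compact arc disjoint from $e$ and hence lies at positive distance from $e'$; similarly, the part of $\gamma$ outside small fixed neighborhoods of $q_A$ and $q_B$ is compact and, by the empty-bigon property, disjoint from the relative interior of $e'$, so it too lies at positive distance from $e'$ outside those neighborhoods. Taking $T$ thinner than all these distances ensures that $\gamma^*$ meets neither $e$ nor any $f\in E\setminus\{e\}$ nor any other part of $\gamma$, so the rerouted $\gamma$ is simple and still crosses each edge of $G-e$ at most once (crossings outside $T$ are unchanged; inside $T$, they can only decrease). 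Iterating this reduction step, $k$ strictly decreases by two per step until $k\leq 1$, yielding the desired insertion of $uv$ into~$D(G)$.
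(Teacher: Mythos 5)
Your bigon-elimination argument is a correct (and more carefully justified) variant of the paper's rerouting: the paper simply replaces the whole subarc of $\gamma$ between its first and last intersection with $e$ by a path running closely along $e$, which removes all but at most one crossing in a single step, whereas you achieve the same reduction two crossings at a time. That part is fine. However, there is a genuine gap in your stopping condition: you declare victory once $k\leq 1$, but this is only sufficient when $e$ is \emph{not} incident to $u$ or $v$. Since $e$ is an arbitrary edge of $G$ and $uv\in\overline G$, it is entirely possible that $e=uw$ (or $e=vw$) for some $w$, and then $uv$ and $e$ share an endpoint; in a simple drawing two edges that share an endpoint must not cross at all, so $k=1$ is not acceptable. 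Your reduction decreases $k$ by exactly two per step and there is no parity obstruction ruling out an odd starting value, so you can genuinely get stuck at $k=1$.

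The paper addresses exactly this residual case with one further rerouting: if $e$ is incident to, say, $u$, and one crossing $\times$ with $e$ remains, then because $e$ is uncrossed in $D(G)$ the points $u$ and $\times$ lie on the boundary of a common cell of $D(G)$, so the initial subarc of $\gamma$ from $u$ to $\times$ can be redrawn inside that cell hugging $e$, on the side of $e$ on which $\gamma$ continues towards $v$, eliminating the last crossing without creating new ones. In your framework this corresponds to eliminating a final ``half-bigon'' (or monogon) at the shared vertex $u$: the subarc of $\gamma$ from $u$ to the unique crossing and the subarc of $e$ from $u$ to that crossing bound a disk, and the same tubular-neighborhood rerouting applies. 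Adding this case would close the gap; as written, the proof is incomplete.
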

\begin{proof}
Since $D(G - e)$ is a subdrawing of $D(G)$, 
	it is clear that 
if $uv$ can be inserted into~$D(G)$ 
	then it can be inserted into $D(G - e)$. 
	Suppose that $uv$ can be inserted into $D(G - e)$ and 
	let~$\gamma$ be a \emph{valid drawing} of $uv$ in $D(G - e)$, that is, one resulting in a simple drawing of $G \setminus \{e\} \cup \{uv\}$.
	We orient $\gamma$ from~$u$ to~$v$.  
	If~$\gamma$ is not a valid drawing of $uv$ in $D(G)$ 
	then it must intersect $ e$ more than once in $ D(G) $.
We can modify $\gamma$ such that it is routed close to $e$ between its first and last intersection with $e$,  
	producing at most one intersection; 
	see \figurename~\ref{fig:reroute} for an illustration. 
	If $ e $ is not incident to $ u $ or $ v $ we are done.
	Else assume without loss of generality that $ e $ is incident to $ u $
	and let $ \gamma' $ be the drawing of~$ uv $ that was modified such that it has only one intersection with~$ e $.
Recall that $ e $ is uncrossed in $ D(G) $.
	Hence, the intersection point $\times$ of $\gamma'$ with~$e$ and the point~$ u $ lie on the boundary of one cell in $ D(G) $.
	Consequently, we can modify $ \gamma' $ in such a way that it is routed closely to $ e $ from $u$ to $\times$ on the side of~$e$ on which~$\gamma'$ continues to $v$ 
	without producing a crossing with any other edge in $ D(G) $.
	Either modification only reduces crossings, but does not introduce new ones,
	hence we obtained a valid drawing of $ uv $ in $ D(G) $ as desired.
\end{proof}

\begin{figure}
	\centering
	\includegraphics[page=1]{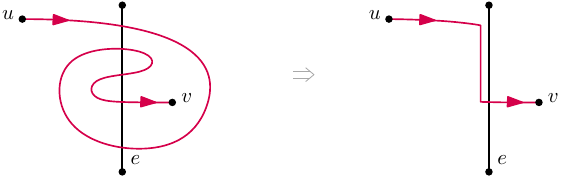}
	\caption{Rerouting $uv$ when it crosses an otherwise uncrossed edge more than once.}
	\label{fig:reroute}
\end{figure}

Equipped with Observation~\ref{obs:isolated} and Lemma~\ref{lem:uncrossed}
we are ready to prove the main theorem of this section.

\begin{restatable}{theorem}{thmfptalg}\label{thm:fptalg}
	Given a simple drawing $ D(G) $ of a graph $ G = (V,E) $ and an edge $ uv $ of $\overline{G}$, there is an \FPT-algorithm in the number $k$ of crossings in $D(G)$ 
	for deciding whether $ uv $ can be inserted into $ D(G) $. 
\end{restatable}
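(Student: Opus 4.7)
The plan is to apply Observation~\ref{obs:isolated} and Lemma~\ref{lem:uncrossed} as reduction rules to obtain a small kernel, and then solve the reduced instance by brute-force enumeration.

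First I would \emph{kernelize}: exhaustively remove isolated vertices other than $u$ and $v$ and uncrossed edges from $D(G)$. By Observation~\ref{obs:isolated} and Lemma~\ref{lem:uncrossed}, each such reduction preserves the answer. After kernelization, every vertex other than possibly $u$ and $v$ is incident to a crossed edge and every edge is crossed at least once. Since $D(G)$ has only $k$ crossings and each crossing is incident to exactly two edges, the kernel has at most $2k$ edges and hence at most $4k+2$ vertices.

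Next I would \emph{enumerate candidate routings on the kernel}. Let $P$ denote the planarization of the kernelized drawing $D(G^*)$: a plane graph with $O(k)$ vertices, $O(k)$ edges, and $O(k)$ faces. Any valid insertion of $uv$ is a simple arc $\gamma$ from $u$ to $v$ avoiding all other vertices, and is characterized up to isotopy in the complement of $V(G^*)$ by the sequence of faces of $P$ that it visits. Consecutive faces in this sequence share a segment of the planarization, and since $\gamma$ must cross each edge of the kernel at most once, the sequence has length at most $2k + 1$ and uses pairwise distinct original edges for its transitions. The maximum face degree of $P$ is $O(k)$, so the number of such sequences is at most $(O(k))^{O(k)} = 2^{O(k \log k)}$.

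Then I would \emph{check each candidate sequence for realizability}. A given face sequence is realized by a simple arc from $u$ to $v$ if and only if the first (resp.\ last) face contains $u$ (resp.\ $v$) on its boundary and, for each face of $P$, the entry/exit segments corresponding to the visits of that face admit a non-crossing matching consistent with the cyclic order of the segments along the face boundary. This condition is checkable in polynomial time per face, and hence per candidate. If some candidate passes, output yes; otherwise output no. The total running time is then $2^{O(k\log k)} \cdot \mathrm{poly}(|V|)$, which is \FPT\ in $k$.

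The main obstacle is making the realizability check precise: a simple arc may revisit the same face several times, so inside each face one has to verify a planarity-like non-crossing matching condition relating entries and exits to the cyclic order along the boundary. Once this condition is formalized and shown to characterize realizability (which follows from the fact that every face is simply connected), the counting and kernelization arguments above turn it into the desired \FPT\ algorithm.
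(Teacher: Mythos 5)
Your kernelization step matches the paper exactly, but your second phase takes a genuinely different and noticeably more complicated route. The paper's proof uses the observation (from the cited~\cite{arroyo_gd_2019}) that because every cell of the planarization is a simply connected region with no vertices or edges in its interior, any valid insertion arc can be shortcut inside a cell so that it visits each cell at most once. This collapses the problem to finding a \emph{heterochromatic path} in the dual graph of the planarization, where colors encode which original edge each dual edge comes from, and the relevant dual graph restricted to the kernel has at most $4k$ edges. Brute-forcing all $O(2^{4k})$ edge subsets gives an $O(nk + k^2 2^{4k})$ algorithm, and the correctness of the reformulation is immediate because a path (unlike a walk) never revisits a vertex/cell.

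You instead enumerate arbitrary face sequences, allowing a face to be revisited, which forces you to verify a per-face non-crossing-matching condition on entry/exit segments. That check can indeed be made precise and correct using simple connectivity of the faces, but you have explicitly left it as the ``main obstacle'' rather than spelling it out, so your proposal as written has a gap there. The shortcutting observation (or, equivalently, the dual-graph/heterochromatic-path reformulation) would let you avoid the realizability check entirely and would also improve your bound from $2^{O(k\log k)}$ to $2^{O(k)}$. Both routes do yield \FPT, so your high-level plan is sound; it is just that the paper's reduction to heterochromatic paths sidesteps precisely the part you flag as unresolved.
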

\begin{proof}
Let $ G' $ be the subgraph of $ G $ remaining after exhaustively deleting uncrossed edges and isolated vertices distinct from $ u $ and $ v $.
	Furthermore, let $ D'(G') $ be the corresponding subdrawing of $ D(G) $.
	By assumption, there are at most $ 2k $ crossed edges in $ G $. Hence $ G' $ has at most $ 4k + 2 $ vertices and $ 2k $ edges.
	Furthermore, by Observation~\ref{obs:isolated} and Lemma~\ref{lem:uncrossed} we can insert $ uv $ into $ D(G) $
	if and only if it can be inserted into $ D'(G') $.
	
	For solving the kernel instance of inserting $uv$ into $D'(G')$,
	we reformulate the problem of inserting an edge 
	into a simple drawing as a problem in the dual graph of its planarization,
	as in~\cite{arroyo_gd_2019}.
	In the planarization crossings are replaced by vertices resulting in a plane drawing. 
	Given a simple drawing $D(G)$ of a graph~$G$, 
	the {\em dual graph} $G^*(D)$ is the plane dual of the planarization of $D(G)$. 
	Thus, every vertex in $G^*(D)$ corresponds to a cell in $D(G)$ and 
	every edge in $G^*(D)$ corresponds to a segment of an edge in $D(G)$.  
	We assign to each edge in $D(G)$ a different color (label)
	and define a coloring $\chi$ of the edges of $G^*(D)$,
	where every edge in $G^*(D)$ inherits the color of its primal edge in $D(G)$.
	Given two vertices $u,v\in V$, 
	let $G^*(D,\{u,v\})$ be the subgraph of $G^*(D)$ obtained by removing from it the edges 
	corresponding to segments of edges incident with $u$ or to $v$. 
	Let $\chi'$ denote the coloring of the edges of $G^*(D,\{u,v\})$ that coincides with $\chi$ in every edge. 
	The problem of extending $D(G)$ with one edge $uv$ is then equivalent to
	the problem of finding
a path in $G^{*}(D,\{u,v\})$ 
	between a vertex corresponding to a cell incident with $u$ and a vertex corresponding to a cell incident with $v$ in which no color given by $\chi$ is repeated (that is, 
	the path is \emph{heterochromatic}). 
	
	The number of segments of crossed edges in $D'(G')$ is at most $4k$. 
	Thus, $G^*(D',\{u,v\})$ has at most $4k$ edges 
	(while the number of vertices might not be bounded by a function of $k$). 
	There are $O(n)$ cells in $D'(G')$ with $u$ or $v$ on their boundary. Further, every cell in  $D'(G')$ has complexity $O(k)$.
	Checking whether $uv$ can be inserted into $D'(G')$ can be done by 
	(i) checking for each of the $O(n)$ vertices in $G^*(D',\{u,v\})$ whether both $u$ and $v$ are incident to the according cell in $D'(G')$ 
	and (ii) checking for each of the $O(2^{4k})$ non-empty subsets of edges in $G^*(D',\{u,v\})$ whether they form a valid heterochromatic path with endpoints incident to $u$ and $v$, respectively. Altogether, this can be done (brute-force) in $O(nk + k^2 2^{4k})$ time. 
\end{proof}

\section{Conclusions}
In this paper we showed that given a simple drawing $D(G)$ of a graph $G$ 
it is \NP-hard to decide if a particular edge from the complement of $G$ 
can be inserted into $D(G)$ such that the result is a simple drawing.
On the positive side, we showed that for a given pseudocircular arrangement $\mathcal A$ of pseudocircles and 
a pseudosegment $\sigma$ it can be decided in polynomial time whether $\sigma$ can be extended to a simple
closed curve $\Phi_\sigma$ such that $\mathcal A \cup \{\Phi_\sigma\}$ is again an arrangement of pseudocircles.
Furthermore, we proved that the problem is \FPT\ with respect to the number of crossings of $D(G)$.

In the light of our results,
checking whether a simple drawing $D(G)$ is saturated 
by trying to insert every edge of the complement of $G$ 
is hopeless (unless $\P = \NP$). 
Thus, it is an interesting open problem 
whether there is a polynomial algorithm for deciding 
if a simple drawing is saturated.

\bibliographystyle{plainurl}
\bibliography{add_one_edge}

\end{document}